\newtheorem{remark}{Remark}
\newtheorem{definition}{Definition}
\newtheorem{theorem}{Theorem}
\newtheorem{example}{Example}
\newtheorem{lemma}{Lemma}
\newcommand{\ignore}[1]{}
\newcommand{\ie}{i.\,e.,\ }
\DeclareFontFamily{U}{MnSymbolC}{}
\DeclareSymbolFont{MnSyC}{U}{MnSymbolC}{m}{n}
\DeclareFontShape{U}{MnSymbolC}{m}{n}{
    <-6>  MnSymbolC5
   <6-7>  MnSymbolC6
   <7-8>  MnSymbolC7
   <8-9>  MnSymbolC8
   <9-10> MnSymbolC9
  <10-12> MnSymbolC10
  <12->   MnSymbolC12%
}{}
\DeclareMathSymbol{\powerset}{\mathord}{MnSyC}{180}
\definecolor{orange}{rgb}{1, .36, .08}
\newcommand{\G}{{\bf G}}
\newcommand{\F}{{\bf F}}
\newcommand{\U}{{\bf U}}
\newcommand{\X}{{\bf X}}
\newcommand{\xit}{\xi,t}
\newcommand{\uH}{{\bf u}^H}
\newcommand{\wH}{{\bf w}^H}
\newcommand{\uu}{{\bf u}}
\newcommand{\ww}{{\bf w}}
\newcommand{\eqdef}{\mathrel{\stackrel{\makebox[0pt]{\mbox{\normalfont\tiny def}}}{=}}}
\newcommand{\Reals}{\mathbb{R}}
\newcommand{\Nat}{\mathbb{N}}
\newcommand{\StateSpace}{\mathcal{X}}
\newcommand{\InputSpace}{\mathcal{U}}
\newcommand{\refutedSpace}{\InputSpace_{\responseW}^-}
\newcommand{\epsRefutedSpace}{\InputSpace_{\responseW}^{\leq\epsilon}}
\newcommand{\notRefutedSpace}{\InputSpace_{\responseW}^+}
\newcommand{\responseU}{\mathbf{u}^*}
\newcommand{\responseW}{\mathbf{w}^*}
\newcommand{\DisturbanceSpace}{\mathcal{W}}
\newcommand{\RLA}{Real Linear Arithmetic}
\newcounter{myctr}
\newcommand{\mypara}[1]{\vspace{0.3em} \noindent {\bf #1.\ }}
\begin{document}
%

\title{Tunable Reactive Synthesis for Lipschitz-Bounded Systems with Temporal Logic Specifications}
\author{\vspace*{-1in}}

%
%
%
%
%

%

\numberofauthors{1}
\author{
\alignauthor
Marcell Vazquez-Chanlatte$^{\S}$\hspace{0.5in} Shromona Ghosh$^{\S}$\hspace{0.5in} Vasumathi Raman$^{\dag}$\\
Alberto Sangiovanni-Vincentelli$^{\S}$\hspace{0.5in} Sanjit A. Seshia$^{\S}$\hspace{0.5in}\\
\vspace{10pt}
       \affaddr{$^{\S}$Department of Electrical Engineering and Computer Sciences, University of California, Berkeley, CA}\\
      \affaddr{$^{\dag}$Zoox Inc., Menlo Park, CA}\\
        \vspace{5pt}
      \affaddr{Email: \{marcell.vc,shromona.ghosh,alberto,sseshia\}@eecs.berkeley.edu, vasumathi.raman@gmail.com}
}

\maketitle

\begin{abstract}
We address the problem of synthesizing reactive controllers for cyber-physical systems subject 
to Signal Temporal Logic (STL) specifications in the presence of adversarial inputs. Given a 
finite horizon, we define a \emph{reactive hierarchy} of control problems that differ in the 
degree of information available to the system
about the adversary's actions over the horizon. We show how to construct reactive
controllers at various levels of the hierarchy, leveraging the existence of Lipschitz
bounds on system dynamics and the quantitative semantics of STL.
Our approach, a counterexample-guided inductive synthesis (CEGIS) scheme 
based on optimization and satisfiability modulo theories (SMT) solving, 
builds a strategy tree representing the interaction between the system and its 
environment. In every iteration of the CEGIS loop, we use a mix of optimization 
and SMT to maximally discard controllers falsified by a given counterexample. 
Our approach can be applied to any system with local Lipschitz-bounded dynamics,
including linear, piecewise-linear and differentially-flat systems. Finally we show an application in the autonomous car domain.
\end{abstract}



\begin{sloppypar}
\section{Introduction}
Synthesis from high-level formal specifications holds promise for
raising the level of abstraction for implementation while ensuring
correctness by construction.  In particular, {\em reactive synthesis}
seeks to generate programs or controllers satisfying formal
specifications, typically in temporal logic, while maintaining an
ongoing interaction with their (possibly adversarial) environments.
Reactive synthesis for linear temporal logic based on
automata-theoretic methods has been demonstrated for simple digital
systems, including some high-level controllers for robotics.  However,
for embedded, cyber-physical systems, reactive synthesis becomes much
more challenging for several reasons.  First, the specification
languages go from discrete-time, propositional temporal logics to
metric-time temporal logics over both continuous and discrete signals,
so the previous automata-theoretic methods do not extend easily.
Second, even for simple classes of dynamical systems and metric
temporal logics, even verification is undecidable, let alone
synthesis.  Third, the state of the art for solving games over
infinite state spaces for metric or quantitative temporal objectives
is far less developed than that for finite games.

In order to deal with these challenges, researchers have resorted to
various simplifications.

One simplification is to consider the control problem over a finite
horizon rather than over an infinite horizon.  This reduces
verification to be decidable for many interesting and practical
systems for logics such as metric temporal logic
(MTL)~\cite{koymans1990specifying} and signal temporal logic
(STL)~\cite{Maler2004}.  In this case, one can encode the controller
synthesis problem for a known, non-adversarial environment as as a
Mixed Integer Linear Program (MILP) or a Satisfiability Modulo
Theories (SMT) problem, both of which are solvable using efficient
implementations, although they are still
NP-hard~\cite{wongpiromsarn2010receding,Raman14}.  These efforts still
do not generate a reactive controller, since the environment is
considered to be fixed and non-adversarial. Raman et
al.~\cite{Raman15} formulated the problem of solving a controller with
limited reactivity in this finite-horizon setting as a max-min
problem, which was solved using a counterexample-guided approach for
systems with linear dynamics. Farahani et al.~\cite{farahani2015} gave
an alternate method using a Monte Carlo approach and dual formulation.
However, full reactivity, over a finite horizon and for more general
systems, is still an unsolved problem to the best of our knowledge.

In this paper, we take a step towards solving this problem. We
consider the problem of synthesizing finite-horizon reactive
controllers for cyber-physical systems subject to STL specifications
in the presence of adversarial inputs.

In contrast to previous approaches, our approach explores a range of
reactivity in control.  Given a finite horizon, we define a reactive
hierarchy of control problems that differ in the degree of information
available to the system about the adversary's actions over the
horizon. We give a {\em tunable} approach that can construct reactive
controllers at various levels of the hierarchy, leveraging the
existence of Lipschitz bounds on system dynamics and the quantitative
semantics of STL. Our approach, a counterexample-guided inductive
synthesis (CEGIS) scheme based on optimization and satisfiability
modulo theories (SMT) solving, builds a strategy tree representing the
interaction between the system and its environment. In every iteration
of the CEGIS loop, we use a mix of optimization and SMT to maximally
discard controllers falsified by a given counterexample. Our approach
can be applied to any system with local Lipschitz-bounded dynamics,
including linear, piecewise-linear and differentially-flat systems,
provided that an optimization and satisfaction oracle are available.

The primary contributions of this work are 
\begin {enumerate*} [1. ]%
\item  Leveraging Lipschitz Continuity to prove convergence of our CEGIS
scheme, when comparable methods do not. 
\item Combining ``nearby'' strategies to create reactive
decision trees 
\item Providing a theoretical benchmark for reactive
synthesis of signal temporal logic. 
\end {enumerate*} 

%
%
The rest of the paper is organized as follows. Sec.~\ref{sec:Prelim}
surveys the relevant background material. In Sec.~\ref{sec:rockpaper}  we describe the tunable CEGIS framework in the context of discrete systems and details of how to extend it for Lipschitz-bounded continuous systems. In Sec.~\ref{sec:control_strategies} we describe how to build reactive controllers in the absence of dominant controllers.

\section{Preliminaries}
\label{sec:Prelim}
\subsection{Dynamical Systems}
We focus on discrete-time dynamical systems of the form:
\begin{equation}
\label{eq:dynamics}
x_{k+1} = f_d(x_k,u_k,w_k)
\end{equation}
where $x_t \in \StateSpace$ represents the (continuous and logical)
states at time $t \in \Nat$, $u_t \in \InputSpace$ are the control
inputs, and $w_t \in \DisturbanceSpace$ are the external (potentially
adversarial) inputs from the environment. We require $\StateSpace$,
$\InputSpace$, and $\DisturbanceSpace$ to be closed and bounded. Thus,
W.L.O.G. we assume $\StateSpace, \DisturbanceSpace, \InputSpace$ are
embedded in a closed and bounded subset of $\Reals^{n_x},
\Reals^{n_u},$ and $\Reals^{n_w}$ for some $n_x, n_u, n_w \in \Nat$
resp.

Given that the system starts at an initial state $x_0 \in \mathcal{X}$, a \emph{run} of the system can be expressed as:
\begin{equation}
\xi = (x_0, y_0, u_0, w_0),(x_1, y_1, u_1, w_1),(x_2, y_2, u_2, w_2), \dots
\end{equation}
\ie as a sequence of assignments over the system variables $V = (x,y, u, w)$. A run is, therefore, a \emph{discrete-time signal}. We denote $\xi_k= (x_k, y_k, u_k, w_k)$.

Given an initial state $x_0$, a finite horizon input sequence ${\bf u}^H = u_0,u_1,\dotsc,u_{H-1}$, and a finite horizon environment sequence ${\bf w}^{H} = w_0,w_1,\dotsc, w_{H-1}$, the finite horizon run of the system modeled by the system dynamics in equation~\eqref{eq:dynamics} is uniquely expressed as:
\begin{equation}
\xi^H(x_0, {\bf u}^{H}, {\bf w}^{H}) = \xi_0, \xi_1, \dotsc, \xi_{H-1}
\end{equation}
where $x_1, \ldots, x_{H-1}$, $y_0, \ldots, y_{H-1}$ are computed using~\eqref{eq:dynamics}.
Finally, we define a finite-horizon cost function $J(\xi^H)$, mapping
$H$-horizon trajectories $\xi^H \in \Xi$ to costs in $\mathbb{R}^+$.

\subsection{Temporal Logic}
In this work, we deal with two variants of temporal logic -- Linear
and Signal -- but our technique is general to any logic that admits
quantitative semantics.  \emph{Linear Temporal Logic} (LTL) was first
introduced in \cite{Pnueli1977} to reason about the behaviors of
sequential programs. An LTL formula is built from atomic propositions
$AP$, boolean connectives (i.e., negation, conjunctions and
disjunction) and temporal operators \textbf{X} (next) and \textbf{U}
(until). As we are only interested in bounded time specifics, we
present a fragment of LTL that omits the $\U$ operator.

LTL fragment is defined by the following grammar
\begin{equation}
\phi ::= p \:|\: \neg \phi \:|\: \phi \wedge \phi \:|\: \textbf{X}
\phi
\end{equation}
where $p \in AP$ is an atomic proposition and disjunction is syntactic
sugar for $\neg (\neg \phi \wedge \neg \phi')$.  For syntactic,
convenience, we introduce three additional temporal operators, $\X^{i} \phi$
being next operator applied $i$ times, (finally) $\F_{[a, b]} \eqdef
\bigvee_{i=a}^{b} \X^{i}$ and $\G_{[a,b]} = \neg \F_{[a,b]} \neg
\phi$.  The semantics for this fragment of LTL formula is defined over
a (finite) sequence of states $\mathbf{x} = x_0, x_1, x_2, \dots$
where $x_i \in 2^{AP}$. Let $\mathbf{x_i} = x_i, x_{i+1}, x_{i+2},
\dots$ denote the run \textbf{x} from position $i$. The semantics are
defined inductively as follows:
\small
\begin{equation}
\label{eq:LTL}
\begin{array}{lll}
\mathbf{x_t} \models p &\Leftrightarrow &p \in x_t\\
\mathbf{x_t} \models \neg \phi &\Leftrightarrow & \mathbf{x_t} \not \models \phi\\
\mathbf{x_t} \models \phi_1 \land \phi_2 &\Leftrightarrow & \mathbf{x_t} \models \phi_1
\land \mathbf{x_t} \models \phi_2\\
\mathbf{x_t} \models \mathbf{X} \phi &\Leftrightarrow &  \mathbf{x_{t+1}} \models \phi\\
\end{array}
\end{equation}
\normalsize

\emph{Signal Temporal Logic} (STL) was first introduced as an
extension of \emph{Metric Interval Temporal Logic (MITL)} to reason
about the behavior of real-valued dense-time
signals~\cite{Maler2004}. STL has been largely applied to specify and
monitor real-time properties of hybrid
systems~\cite{donze2012temporal}.  Moreover, it offers a 
quantitative notion of satisfaction for a temporal
formula~\cite{DonzeM10,donze2013efficient}, as further detailed below.
To simplify exposition, we only describe the fragment of STL appearing
in our examples. In particular we omit Until and non-interval
operators. However, our approach applies to any STL formula
with bounded horizon of satisfaction, as in \cite{Raman15}.

A STL formula $\varphi$ is evaluated on a signal $\xi$ at some time
$t$. We say $(\xi,t) \models \varphi$ when $\varphi$ evaluates to true
for $\xi$ at time $t$. We instead write $\xi \models \varphi$, if
$\xi$ satisfies $\varphi$ at time $0$. The atomic predicates of STL
are defined by inequalities of the form $\mu(\xi, t)>0$, where $\mu$
is some function of signal $\xi$ at time $t$. We consider the fragment
of STL with syntax given by:
\begin{equation}
\varphi ::= \mu(x) > 0 \:|\: \neg \varphi \:|\: \varphi \wedge \varphi  \:|\: \F_{[a,b]} \varphi
\end{equation}
where $\mu$ is a linear function.
We again define $\G_{[a,b]}$ and $\vee$ as syntatic sugar. Intuitively,
$\xi \models \G_{[a,b]} \varphi$ specifies that $\varphi$ must hold for
signal $\xi$ at all times of the given interval, \ie $ t\in
[a,b]$. Similarly $\xi \models \F_{[a,b]} \psi$ specifies that $\psi$
must hold at some time $t'$ of the given interval.

The satisfaction of a formula $\varphi$ for a signal $\xi$
at time $t$ can be defined similar to Eqn~\ref{eq:LTL} by replacing
$\mathbf{x}_t$ by $(\xi,t)$, $p$ by $\mu$, $p \in x_t $ by
$\mu(\xi, t) > 0$. For temporal operators $\G_{[a,b]}$,
$\F_{[a,b]}$ we consider satisfaction at all points $i \in
[t+a,t+b]$. 

A \emph{quantitative} or \emph{robust semantics} is defined for STL
formula $\varphi$ by associating it with a real-valued function
$\rho^\varphi$ of the signal $\xi$ and time $t$, which provides a
``measure''/lowerbound of the margin by which $\varphi$ is
satisfied. Specifically, we require $(\xit) \models \varphi$ if and
only if $\rho^\varphi(\xit) > 0$.
The magnitude of $\rho^\varphi(\xit)$ can then be interpreted as an
estimate of the ``distance'' of a signal $\xi$ from the set of
trajectories satisfying or violating $\varphi$.

Formally, the quantitative semantics is defined as follows:
\small
\begin{equation}
\label{eq:robust_STL}	
\begin{array}{lll}
\rho^\mu(\xit)&=& \mu(\xi,t)\\
\rho^{\neg \varphi}(\xit)&=&  -\rho^\varphi(\xi,t) \\
\rho^{\varphi \land \psi}(\xit)&=& \min (\rho^{\varphi}(\xit),\rho^{\psi}(\xit)    )\\
\rho^{\F_{[a, b]} \varphi}(\xit) &=& \sup_{t' \in [t+a,t+b] }\rho^{\varphi}(\xi,t')\\
\end{array}
\end{equation}
\normalsize

Finally, when the initial condition is implicit or doesn't matter, we
will often write $(\uu^H, \ww^H) \models \phi$ as short hand for
$\xi(x_0, \uu^H, \ww^H) \models \phi$.

\mypara{Lipschitz Continuity}
A real valued function $f : \mathbb{R} \rightarrow \mathbb{R}$ is
Lipschitz continuous if there exists a positive real constant $K$
(known as the Lipschitz bound) such that for all real $x_1$ and $x_2$,
\begin{align*}
|f(x_1) - f(x_2)| \leq K |x_1 - x_2|
\end{align*}  

The results of our paper can be extended to any system where the
quantiative semantics composed with the dynamics are Lipschitz
continuous in both $u_k$ and $w_k$. For technical reasons, we
constrain ourselves to the infinity and sup norms. As we have embedded
our states and input into $\Reals^n$, discrete dynamics can also be
meaningfully Lipschitz continuous.

This allows us to handle linear systems, purely discrete systems,
differentially flat systems, switched linear systems and many other
classes of non-linear systems.

\mypara{Satisfaction and Optimization Oracles}
Ultimately, our technique relies on having access to an optimization
oracle or a satisfaction oracle. To this end, we note that it is
trivially possible (via their semantics) to translate formulas
$\varphi$ in the above fragments of LTL and STL (with linear
predicates) into sentences in Real Linear Arithmetic. Thus, if the
system dynamics are also encodable in RLA, we can synthesize feasible
control sequences using Satisfiability Modulo Theory (SMT) engines.

\subsection{Controller Synthesis}
\mypara{Dominant Strategies}
We say a strategy is \textit{dominant} for the system if it
results in the system satisfying the specification irrespective of
what the environment does:
\begin{equation}
\uu^* \text{ is dominant } \iff \forall \ww\ :\ (\uu^*, \ww) \models \phi
\end{equation}
Similarly a dominant strategy for the environment is:
\begin{equation}
\ww^* \text{ is dominant } \iff \forall \uu\ :\ (\uu, \ww^*) \models \neg\phi
\end{equation}

\mypara{CEGIS}
\textit{Counter-example guided inductive synthesis} was introduced by
Solar-Lezama et al.~\cite{solar2006} as an algorithmic paradigm for
inductive program synthesis.  Raman et al.~\cite{Raman15} showed how
to use the CEGIS paradigm to find a dominant controller using
counterexamples generated by an adversary. They begin with a candidate set
${\cal W}_\text{cand} \subset {\cal W}$, and find a $\uu$ that defeats all
$\ww \in {\cal W}_\text{cand}$. They then find a $\ww^* \in {\cal W}$ that defeats
this $\uu^*$. If such a $\ww$ is found, it is added to ${\cal W}_\text{cand}$ and
the loop repeats. Otherwise, $\uu^*$ is dominant. The disadvantage of this method
is that, as the size of ${\cal W}_\text{cand}$ grows, so too does the problem being
solved (in the case of \cite{Raman15}, this is the size of the resulting MILP).
To counteract the blowup, the CEGIS loop is terminated after a maximum number of
times. 

Note that in \cite{Raman15}, collecting previously found $\ww^*$ in
${\cal W}_\text{cand}$ serves to implicitly eliminate subsets of $\cal U$ refuted by
those $\ww^*$ from consideration in the next step. We will propose an alternative
technique, directly removing from $\cal U$ the inputs refuted by each $\ww^*$.
Alg~\ref{Alg:Naive_CEGIS}
sketches the general CEGIS algorithm for our setting, which is a variant
of that in Raman et al.~\cite{Raman15}.  I

The system proposes
a candidate $\responseU$. During the adversary's turn, it finds a
$\responseW$, that refutes that $\responseU$ is dominant. This loop
continues until either the controller is not able to find a dominant
$\responseU$, in which case the adversary wins; or when the adversary
is not able to find a counterexample, thus implying $\responseU$ is
dominant (and thus the system wins).
\begin{remark}
If the adversary wins, it does not imply the most recent
counterexample $\responseW$ is a dominant strategy for the adversary
since we don't know if it could have falsified by discarded controls
from $\InputSpace$.
\end{remark}
This algorithm suffers from a blow up in the representation of set
$\InputSpace$ after discarding control strategies. To counteract this
blowup, we might consider running the loop a maximum number of times,
and not discarding any falsified controls in each iteration. We will
call this algorithm MemorylessCEGIS. However, the lack of maintaining
the history (in terms of discarding falsified $\InputSpace$) can lead
to oscillating controls, making the algorithm sound but not complete.

CEGIS has also been used in ~\cite{Abate2017} to design digital controllers for intricate continuous plant models.

\section{Tunably Complete CEGIS}
\label{sec:rockpaper}
\mypara{Problem Statement}
We present a CEGIS scheme with memory adapted from
~\cite{Raman15}. Our framework compactly represents the set
$\InputSpace^H$ after discarding control strategies in finite memory.

We seek a sound and tunably complete search algorithm for dominant
strategies:
\begin{equation}
 \uu^* \leftarrow \exists \uu \forall \ww : (\uu, \ww) \models \phi
\end{equation}
given that: 1. The space of inputs and disturbances is bounded,
2. $\phi$ admits a quantitative measure, $\rho$, of satisfaction.
3. $\rho$ is Lipschitz in $\uu$ and $\ww$.

\begin{remark}
  We will often drop the $H$ from $\InputSpace^H$ during our
discussion of dominant strategy games. This is because one must choose
all time points at the same time, reducing it to a 1 off game in a
higher dimensional $\InputSpace$.
\end{remark}

\mypara{Naive Algorithm}
We start with an example. While this example is discrete and has no
dynamics, it illustrates the fundamental concepts. After demonstrating
the synthesis procedure on the discrete game, we introduce a
continuous variant, which suggests changes that can be made to provide
termination.

\begin{example}
	\label{sec:RPS_discrete}
Consider a variant of the familiar zero-sum game: Rock, Paper,
Scissors.  Two players, $p_1$ and $p_2$, simultaneously choose either
Rock ($R$), Paper ($P$), or Scissors ($S$). Suppose $p_1$ and $p_2$
play moves $i$ and $j$ respectively. $p_1$ loses if $(i=R \wedge j \in
\{R, P\}) \vee (i=P \wedge j \in \{P, S \}) \vee (i=S \wedge j \in
\{R, P\})$. If $p_1$ does not lose, another round is played. $p_1$
wins if he/she never loses a round. We restrict the game
to $k$ rounds.

This game can be specified by the conjunction of the following LTL
specifications, $\phi_{RPS} \eqdef \G_{[0, k]}(\phi_{i}^R \wedge \phi_{i}^P
\wedge \phi_{i}^S)$:
\begin{align}
\phi_{i}^R: & \ [i=R \implies j\neq P]\\
\phi_{i}^P: & \ [i=P \implies j\neq S]\\
\phi_{i}^S: & \ [i=S \implies j\neq R]
\end{align}

\end{example}

While for such a toy example there is clearly no dominant strategy, it
is instructive to see how the CEGIS scheme presented in \cite{Raman15}
and stylized in Alg~\ref{Alg:Naive_CEGIS} behaves.

\begin{algorithm}
	\scriptsize
	\caption{Naive CEGIS scheme} \label{Alg:Naive_CEGIS}
	\begin{algorithmic}[1]
		\Procedure{NaiveCEGIS}{}
		\State {\textbf{Input:} $\InputSpace, \DisturbanceSpace, \phi$}
		\State {\textbf{Output:} $\responseU$}
		\State $\responseW \sim \DisturbanceSpace$
		\While{$\responseW \neq \bot $} 
			\State $\responseU \gets FindSat(\responseW, \InputSpace, \phi)$
			\If{$\responseU = \bot$}
                        \textbf{break}
			\EndIf
			\State $\responseW \gets FindSat(\responseU, \DisturbanceSpace, \neg \phi)$
			\State $\InputSpace \gets \InputSpace \setminus \{\responseU\}$

		\EndWhile
		\State \Return $\responseU$
		\EndProcedure
	\end{algorithmic}
\end{algorithm}

We add an element $\bot$ to $\InputSpace$ and $\DisturbanceSpace$ to
denote the undefined controller. It is returned when no (satisfying)
controller exists. The algorithm takes as input the system controls
$\InputSpace$, the adversary controls $\DisturbanceSpace$, and the
specification $\phi$. The subroutine \textit{FindSat} in finds a
$\responseU$ that meets the specification given $\responseW$.  Passing
in the negated specification, as done on line 8, finds a counter
example, $\responseW \in \DisturbanceSpace$ such that refutes
$\responseU$.  We say $\responseW$ falsifies or refutes
$\responseU$. If such a $\responseW$ exists, we can discard
$\responseU$ as a candidate for dominant strategies from
$\InputSpace$. If $\responseW = \bot$, then the system has found a
dominant strategy. If $\responseU = \bot$, then no dominant strategy
was found. Since at every step we discard at most one controller, the
algorithm takes at most $|\InputSpace|$ iterations to terminate.

We show the results of Alg~\ref{Alg:Naive_CEGIS} for a single round
Rock, Paper, Scissors game in Fig~\ref{Fig:CEGIS}, In this example,
$p_1$ takes on the role of the system and $p_2$ takes on the role of
the environment. The cells with -1(+1) implies $p1$ loses(wins).We can see from Fig~\ref{Fig:BluSTL_CEGIS}, that the
MemorylessCEGIS scheme cannot answer if a dominant strategy exists or
not while Alg~\ref{Alg:Naive_CEGIS} declares there exists no dominant
strategy in 3 iterations of the CEGIS loop.

\begin{figure}[h]
  \centering 
  \includegraphics[width=1.9in,height=1.9in]{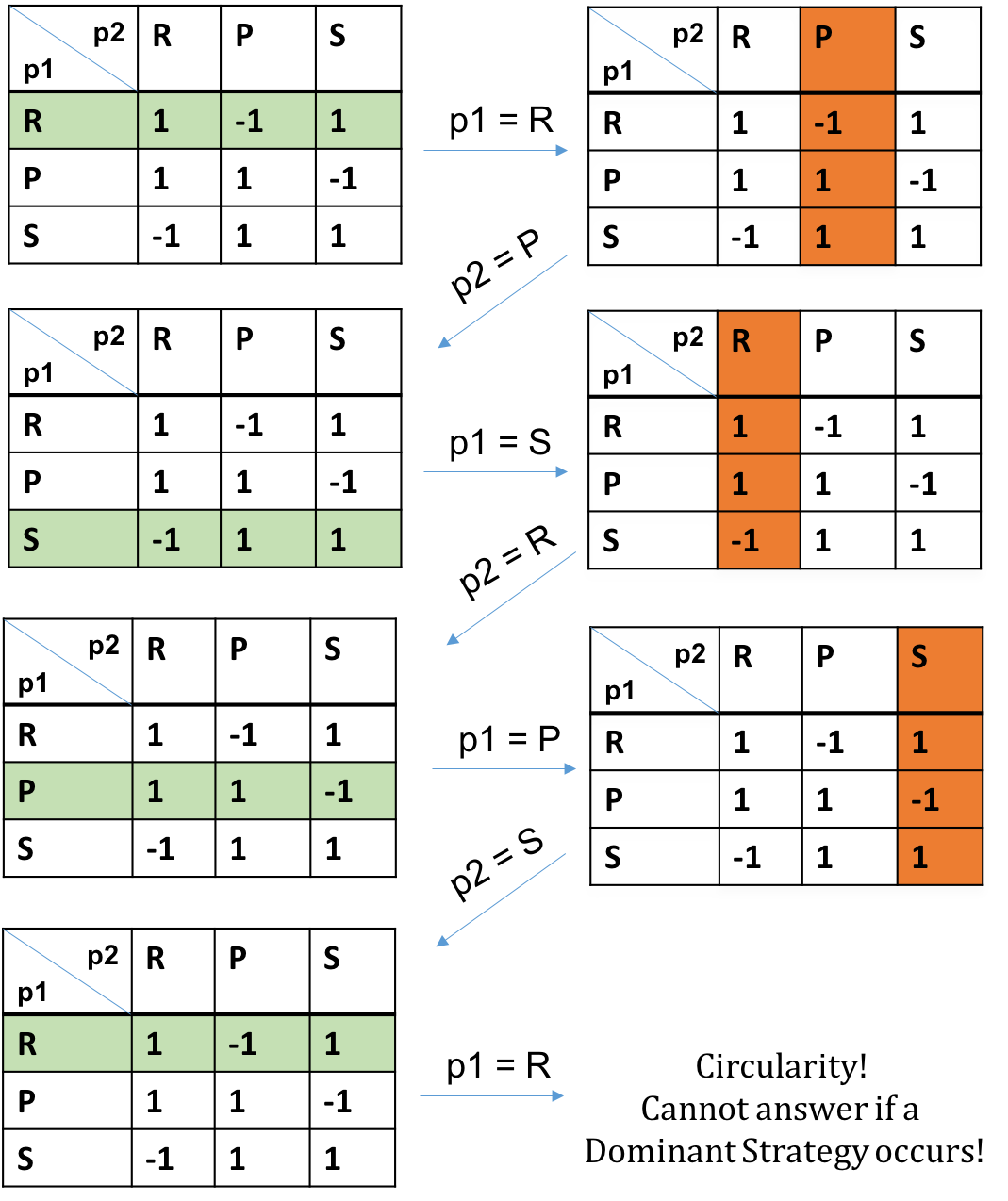}\label{Fig:BluSTL_CEGIS}
\end{figure}

\begin{figure}[h]
 \centering 
 \includegraphics[width=1.9in,height=1.9in]{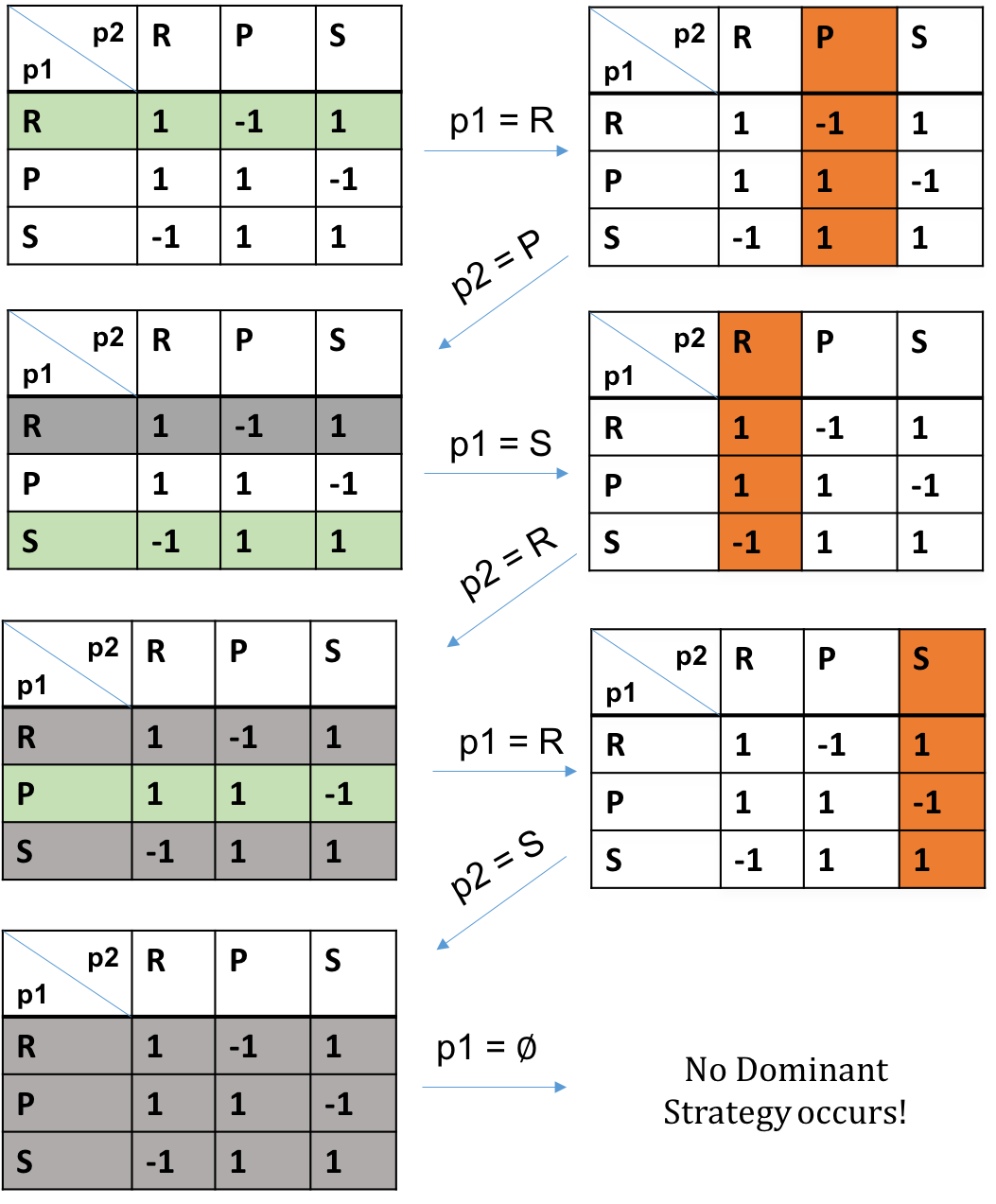}\label{Fig:Naive_CEGIS}
  \caption{Finding dominant strategies for $p_1$\label{Fig:CEGIS}}
\end{figure}

\begin{figure}[h]
 \centering
 \includegraphics[width=1.9in]{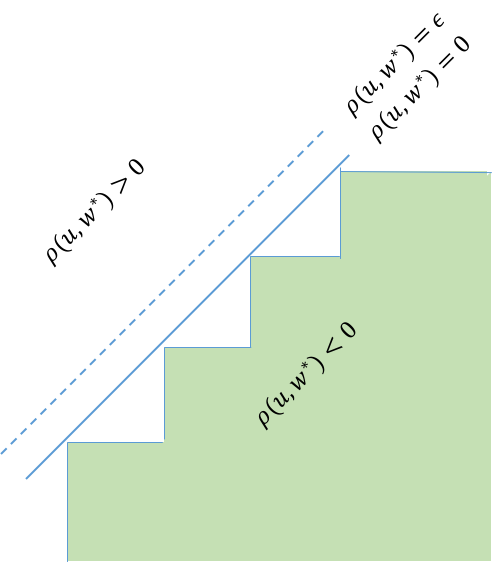}
 \caption{Covering the boundary with rectangles with $\rho = 0$ and
  		$\rho = \epsilon > 0$.\label{Fig:epsilon_robust}}
\end{figure}

\label{sec:continuous_games}
\mypara{Continuous Games}
We now turn our attention to games with continuous state spaces. To
begin, we will first give an example to illustrate that
Alg~\ref{Alg:Naive_CEGIS} may not terminate in the continuous
setting. This leads to a modification that guarentees termination,
given some technical assumptions (Sec~\ref{sec:modified_cegis}).

\begin{figure}[H]
	\includegraphics[scale = 0.5]{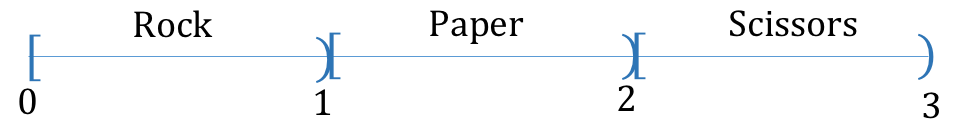}
	\caption{Rock paper scissors as a continuous game}
	\label{Fig:RPS}
\end{figure}

\begin{example}
\label{sec:continuous_rps}

We again motivate our construction using the familiar game of rock,
paper, scissors (this time embedded into a continuous state space). As
before, this game is simple and illustrates the key points. We start
by embedding the atomic propositions $R$, $P$, $S$ into $[0, 3)$, with
$R \mapsto [0, 1), P \mapsto [1, 2), S \mapsto [2, 3)$ (See
Fig~\ref{Fig:RPS}).

Let $i$ and $j$ again denote $p_1$ and $p_2$'s states resp. Finally,
the stateless dynamics of given by:
\begin{equation} \label{eq:Continuous_RPS_Dynamics}
	i_{n+1} = u_n \wedge 
	j_{n+1} = w_n
\end{equation}

Where $u(n) \in \InputSpace, w(n) \in \DisturbanceSpace$ and $\InputSpace = \DisturbanceSpace =  [0, 3)$.

As before, we encode $p_1$'s objective in Temporal Logic:
\begin{equation} \label{eq:Continuous_RPS_Spec}
\begin{split}
        \varphi_{RPS} & \eqdef G_{[0, H]} (\varphi_s^R \wedge \varphi_s^P \wedge \varphi_s^S)\\
	\varphi_s^R:  & (i \in R) \rightarrow (j \in S) \\
	\varphi_s^P:  & (i \in P) \rightarrow (j \in R) \\
	\varphi_s^S:  & (i \in S) \rightarrow (j \in P)
\end{split} 
\end{equation}

Where the $(\in)$ operator is simply syntactic sugar. For example the predicate $i \in R$ rewrites to, $i \geq 0 \wedge i < 1$.

Clearly, applying Alg~\ref{Alg:Naive_CEGIS} to find dominant
strategies for $p_1$ or $p_2$ is hopeless since the space of controls,
there are infinitely many copies of the $R$, $P$, $S$ moves.

\end{example}

\subsection{Adapting CEGIS}
\label{sec:modified_cegis}
Let us reflect on what went wrong during the continuous rock, paper,
scissors example. In analogy with the discrete setting, at each round
one row (one element of $\InputSpace$) of the induced game matrix is
refuted. However, because $\InputSpace$ now contains infinite elements
(and thus infinite rows), termination is not guaranteed. A natural
remedy then might be to remove more than just one row at a time. In
fact, to make any progress, we would need to remove a non-zero measure
subset of $\InputSpace$ (or an infinite number of rows). 

\begin{remark}
  The CEGIS scheme in~\cite{Raman15} uses a maximization
  oracle to find a counterexample which maximally falsifies the
  dominant control proposed by the controller. For our purposes
  we use a satisfaction oracle to find the
  counterexamples. While the quality of counterexamples may
  vary (they no longer maximally falsify the control), this does
  not affect our termination guarantees, though it might affect
  the rate of convergence. Quantifying this convergence is a
  topic of future exploration.
\end{remark}

Motivated by this reflection, we modify the CEGIS loop to, per
iteration, remove all the rows refuted by a given counterexample
$\responseW$ \footnote{For a discrete game, one may have to make a
  satisfaction query for each element in $\InputSpace$ and thus, while
  the number of iterations of the CEGIS loop decreases, the number of
  calls to the solver remains unchanged compared to
  Alg~\ref{Alg:Naive_CEGIS}.}. A sketch of this algorithm is given in
Alg~\ref{Alg:modified_cegis}. 
\begin{algorithm}
  \scriptsize
  \caption{Modified CEGIS scheme} \label{Alg:modified_cegis}
  \begin{algorithmic}[1]
    \Procedure{ModifedCEGIS}{$\epsilon$}
    \State {\textbf{Input:} $\InputSpace, \DisturbanceSpace, \phi$}
    \State {\textbf{Output:} $\responseU$}
    \State $\responseW \sim \DisturbanceSpace$
    \While{$\responseW \neq \bot $} 
    \State $\responseU \gets FindSat(\responseW, \InputSpace, \phi)$
    \If{$\responseU = \bot$}
    \textbf{break}
    \EndIf
    \State $\responseW \gets FindSat(\responseU, \DisturbanceSpace, \neg \phi)$
    \algrenewcommand{\alglinenumber}[1]{\footnotesize$\star$\phantom{:}}
    \State $\InputSpace \gets \mathit{WithoutRefuted}_\epsilon(\responseW, \InputSpace, \phi)$
    \algrenewcommand{\alglinenumber}[1]{\footnotesize #1:}
    \EndWhile
    \State \Return $\responseU$
    \EndProcedure
  \end{algorithmic}
\end{algorithm}
Alg~\ref{Alg:modified_cegis} uses a new subroutine, $\mathit{WithoutRefuted}$,
which takes the current $\responseW$ and removes the refuted region
from $\InputSpace$. We now make this more precise.
\begin{definition}\label{def:refutedSpaces}
  Given $\responseW \in \DisturbanceSpace$ let $\notRefutedSpace$ be
  the subset of $\InputSpace$ s.t. $(\uu, \responseW) \models \phi$.
  Further, let $\refutedSpace \eqdef \InputSpace - \notRefutedSpace$.
\end{definition}
Our goal is for $\mathit{WithoutRefuted}$ to remove $\refutedSpace$ from
$\InputSpace$.  The key insight of this work is that if the
quantitative semantics are Lipschitz Continus in $\uu \in \InputSpace$
then a counterexample pair $(\uu, \responseW)$ can be generalized to a
ball of counterexamples (with radius proportional to degree of
satisfaction). Each of these balls has non-zero measure, and thus one expects
$\refutedSpace$ to be contained in the union of a finite number of these
balls.

\mypara{Generalizing counterexample pairs ($\uu$, $\responseW$)}
\label{sec:balls_prod}
Consider a point, $\uu$, in the interior of $\refutedSpace$. Since
$\uu$ is in $\refutedSpace$ and doesn't lie on the boundary,
$\rho^\phi(\uu, \responseW) < 0$. If $L_u$ is the Lipschitz bound on
the rate of change of $\rho^\phi$ w.r.t. changes in $\uu$, then
every $\uu'$ within the (open) ball of radius
$\rho^\phi(\uu, \responseW)/ L_u$ of $\uu$ also has robustness
less than $0$ and is thus also refuted.

The next example illustrates that $\refutedSpace$ is not necessary
coverable by a finite number of these balls.
\begin{example}
  Consider the boundary represented by diagonal bold blue line in
Fig~\ref{Fig:epsilon_robust}. We see that, approximating the boundary
by a finite number of rectangles always leaves out a finite amount of
space in $\rho^\phi(\uu,\ww^*) < 0$ which is not covered(shown by the
white triangles). Further, the size of rectangle is proportional to
size of the robustness. At the boundary this becomes 0, the rectangle
induced covers 0 area. To truly approximate the boundary, we would
need to compute an infinite number of rectangles, and thus loose our
termination guarantees.
\end{example}
\mypara{Epsilon-Completeness}
Given that we cannot cover $\refutedSpace$ exactly, we must ask
ourselves what compromises we are willing to accept for
termination. Fundamentally, we prefer to err on the side of
``safety'', implying over-approximating $\refutedSpace$;
however we would like to make this over-approximation tunable. This
is further motivated by the observation that it is (often) undesirable
to have a controller that just barely meets the specification, as due
to modeling errors or uncertainty, the system may not perform exactly
as expected. As such, one typically seeks ``robust'' controllers. This
leads us to the following proposition: what is the minimum robustness
controller we are willing to miss by over-approximating $\refutedSpace$?

\begin{algorithm}
  \scriptsize
  \caption{Removes over-approximation of refuted inputs} \label{Alg:compute_refuted}
  \begin{algorithmic}[1]
    \Procedure{WithoutRefuted}{$\epsilon$}
    \State {\textbf{Input:} $\ww,\InputSpace, \phi$}
    \State {\textbf{Output:} $\InputSpace$}
    \While {True}
    \State $\uu \gets FindSat(\ww, \InputSpace, \neg \phi)$
    \If{$\uu = \bot$}
    \textbf{break}
    \EndIf
    \State $R \gets \epsilon + |\rho^\phi(\uu, \ww)|/L_u$
    \State $\InputSpace \gets \InputSpace \setminus 
        \{\uu' \in \InputSpace\ :\ |\uu - \uu'| < R\}$
    \EndWhile
    \State \Return $\InputSpace$
    \EndProcedure
  \end{algorithmic}
\end{algorithm}

Given this concession, we now provide an implementation of
$WithoutRefuted$ in Alg~\ref{Alg:compute_refuted}.
For analysis we introduce notation for $\epsilon$-refuted space.
\begin{definition} Denote by $\epsRefutedSpace$ the set of inputs
that are not $\epsilon$-robust to $\responseW$. That is
\begin{equation}
 \epsRefutedSpace \eqdef \{\forall \uu \in \refutedSpace\
 :\ \rho(\uu, \ww, \phi) \leq \epsilon\}
\end{equation}
\end{definition}
Next, we show that Alg~\ref{Alg:compute_refuted} over-approximates
$\refutedSpace$ and under approximates $\epsRefutedSpace$ in a finite
number of iterations. As such, this implies line 8 of
Alg~\ref{Alg:compute_refuted} throws away all of $\refutedSpace$, but
no controllers that are $\epsilon$-robust.

\begin{lemma}\label{Lem:compute_refuted_terminates}
Alg~\ref{Alg:compute_refuted} always terminates.
\end{lemma}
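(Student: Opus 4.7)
My plan is to prove termination by a volume/packing argument, using the fact that each iteration carves out a ball of radius bounded below by $\epsilon$ from a bounded set.

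First, I would observe that each iteration of the loop removes the set $B(\uu_k) \eqdef \{\uu' \in \InputSpace : |\uu_k - \uu'| < R_k\}$ where $R_k = \epsilon + |\rho^\phi(\uu_k, \ww)|/L_u \geq \epsilon$. Crucially, the center $\uu_k$ returned by \textit{FindSat} on line 5 lies in the current $\InputSpace$, which has already had all balls from iterations $1, \dotsc, k-1$ removed. Hence for every $j < k$, we have $\uu_k \notin B(\uu_j)$, i.e.\ $|\uu_k - \uu_j| \geq R_j \geq \epsilon$. The sequence of centers $\{\uu_k\}$ produced by the loop is therefore pairwise $\epsilon$-separated in the chosen norm.

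Next, I would invoke the standing assumption that $\InputSpace^H$ embeds in a closed and bounded subset of $\Reals^{H \cdot n_u}$. Any bounded subset of $\Reals^{H \cdot n_u}$ can be covered by finitely many balls of radius $\epsilon/2$ (since a finite cube of side length $D$ admits a grid cover of size $\lceil 2D/\epsilon \rceil^{H \cdot n_u}$, for instance). By the pigeonhole principle, if the loop produced infinitely many centers, two of them would fall in the same $\epsilon/2$-ball and so would be at distance $< \epsilon$, contradicting the $\epsilon$-separation established above. Thus only finitely many iterations can occur before \textit{FindSat} must return $\bot$ and the loop exits.

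The main subtlety, and what I expect to be the only real obstacle, is being careful about whether $|\uu_k - \uu_j| \geq R_j$ or $> R_j$ given that line 8 removes an open ball. Since $\uu_k \in \InputSpace$ after the open ball $B(\uu_j)$ was removed, we only get the non-strict inequality $|\uu_k - \uu_j| \geq R_j$, but since $R_j \geq \epsilon > 0$, this is still enough for the packing argument. A secondary detail worth stating explicitly is that the bound $L_u$ is taken with respect to the same norm used to define the removed ball, so that the generalization argument preceding this lemma (every point within $|\rho^\phi(\uu_k,\ww)|/L_u$ of $\uu_k$ is refuted) is consistent with the metric in which $\InputSpace$ is bounded.
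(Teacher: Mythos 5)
Your proof is correct and follows essentially the same packing argument as the paper: each removed ball has radius bounded below by a positive constant, so the centers form an $\epsilon$-separated set that cannot be infinite inside the bounded space $\InputSpace$. You merely make explicit the pigeonhole/covering step that the paper leaves implicit (the paper states the minimum radius as $\epsilon/L$ rather than $\epsilon$, reflecting a small inconsistency in how line 8 of the algorithm is written, but this does not affect the argument).
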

\begin{proof}
Note that the radius of a counterexample ball is atleast
 $R_{min} \eqdef \frac{\epsilon}{L} > 0$. Thus, if
 Alg~\ref{Alg:compute_refuted} never terminates, one could find always
 find a point $R_{min}$ away from all previously sampled
 points. However, this implies $\InputSpace$ is unbounded, which
 contradicts our assumptions. Thus $\mathit{WithoutRefuted}$ terminates in a
 finite number of iterations.
\end{proof}

\begin{lemma}\label{Lem:finRects}
Let $B$ be the result of Alg~\ref{Alg:compute_refuted} on
$\InputSpace, \responseW, \uu, \phi$. Then:
\begin{equation}
\refutedSpace \subset \InputSpace \setminus B \subset \epsRefutedSpace
\end{equation}
\end{lemma}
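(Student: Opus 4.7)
The plan is to prove the two set inclusions separately, one by exploiting the termination condition of Algorithm~\ref{Alg:compute_refuted}, the other by applying the Lipschitz hypothesis on $\rho^\phi$ in $\uu$.

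First, for the left inclusion $\refutedSpace \subset \InputSpace \setminus B$, I would argue by contradiction. By Lemma~\ref{Lem:compute_refuted_terminates}, the while loop terminates; the only way this happens is if the call $FindSat(\ww, \InputSpace, \neg\phi)$ returned $\bot$, meaning that no $\uu$ in the current (remaining) set $B$ satisfies $(\uu, \responseW) \models \neg\phi$. Therefore every $\uu \in B$ satisfies $(\uu, \responseW) \models \phi$, i.e., $B \subset \notRefutedSpace$. Equivalently, $\refutedSpace = \InputSpace \setminus \notRefutedSpace \subset \InputSpace \setminus B$, as desired.

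Second, for the right inclusion $\InputSpace \setminus B \subset \epsRefutedSpace$, I would chase a removed input. Any $\uu' \in \InputSpace \setminus B$ was discarded on line~8 during some iteration, witnessed by a point $\uu_k \in \refutedSpace$ with $\rho^\phi(\uu_k, \responseW) \leq 0$ and $|\uu' - \uu_k| < R_k = \epsilon + |\rho^\phi(\uu_k, \responseW)|/L_u$. Applying the Lipschitz bound $L_u$ of $\rho^\phi$ in its $\uu$-argument gives
\begin{equation*}
\rho^\phi(\uu', \responseW) \;\leq\; \rho^\phi(\uu_k, \responseW) + L_u |\uu' - \uu_k| \;<\; \rho^\phi(\uu_k, \responseW) + L_u \epsilon + |\rho^\phi(\uu_k, \responseW)|.
\end{equation*}
Since $\rho^\phi(\uu_k, \responseW) \leq 0$, the first and third summands cancel, leaving $\rho^\phi(\uu', \responseW) < L_u \epsilon$, which places $\uu'$ in $\epsRefutedSpace$ (up to the constant $L_u$ absorbed into the tunable $\epsilon$ already in Definition~\ref{def:refutedSpaces}).

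The only real subtlety I expect is careful bookkeeping at the boundary $\rho^\phi = 0$: since refutation uses the convention that $\models \phi$ iff $\rho^\phi > 0$, one must be consistent about whether the refuted/$\epsilon$-refuted sets are open or closed at zero. As long as one uses the same convention on both sides of each inclusion, the Lipschitz estimate and the termination argument go through unchanged, so I would resolve this at the start by pinning down the inequalities to match Definition~\ref{def:refutedSpaces}. Everything else is a direct chain of inequalities and does not require any further geometric covering argument, since the covering has already been built one ball at a time by the algorithm.
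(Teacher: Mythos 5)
Your proof is correct and follows essentially the same route as the paper's: the left inclusion from the while-loop's termination condition (no remaining point satisfies $\neg\phi$, hence $B$ misses $\refutedSpaceWW$ entirely), and the right inclusion from the Lipschitz bound on $\rho^\phi$ in $\uu$, which you run directly where the paper runs it as a contradiction --- the same inequality chain either way. The only divergence is your residual constant $L_u\epsilon$ versus the paper's $\epsilon$, and that traces to the ambiguous radius on the starred line of Alg~\ref{Alg:compute_refuted} ($\epsilon + |\rho^\phi(\uu,\ww)|/L_u$ as printed, versus $(\epsilon + |\rho^\phi(\uu,\ww)|)/L_u$ as used in the paper's own proof and in Lemma~\ref{Lem:compute_refuted_terminates}); under the latter reading your computation gives exactly $\rho^\phi(\uu',\responseW) < \epsilon$ with no constant left to absorb.
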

\begin{proof}
 We must show that $B$ removes from $\InputSpace$ only inputs with
 robustness \emph{strictly} less than $\epsilon$, i.e. 
 $\InputSpace \setminus B \subset \epsRefutedSpace$.
  Assume for a contradiction that $\uu'$ with $\rho^\phi(\uu', \ww) \ge \epsilon$
  was removed, then there must have been $\uu$ with 
  $|\uu - \uu'| < (\epsilon + |\rho^\phi(\uu, \ww)|)/L_u$, but $\rho^\phi(\uu, \ww) \le 0$. 
  This contradicts the Lipschitz assumption on $\rho$, since we have
  $|\rho^\phi(\uu', \ww) - \rho^\phi(\uu, \ww)| \ge \epsilon +  |\rho^\phi(\uu, \ww)| > L_u * |\uu - \uu'|$. 
  
 Next, let us show the $\refutedSpace \subset \InputSpace \setminus
 B$.  The termination condition for Alg~\ref{Alg:compute_refuted} is
 that no $\uu$ satisfies $\neg \phi$, so $B$ does not intersect
 $\refutedSpace$. Thus, by construction, we only terminate if
 $\refutedSpace \subset \InputSpace \setminus B$.
\end{proof}

\begin{remark}
Note that in order for this set of balls to remain within the theory
of \RLA (RLA), one must use the infinity norm, which has the effect of
inducing hyper-square, encodable using $2\cdot n_u$
constraints. Explicitly we encode the square centered on $\responseU$
as:
\begin{equation}
\left(\bigwedge_{i=1}^{n_u}\uu_i - \responseU_i \leq R \right) \wedge
\left(\bigwedge_{i=1}^{n_u}\responseU_i - \uu_i \leq R \right)
\end{equation}
which is a valid formula in RLA.
\end{remark}

We are finally ready to state and prove our main theorem regarding the
termination of Alg~\ref{Alg:modified_cegis}.
\begin{theorem}\label{thm:modified_cegis_finite_iter}
For a system which is Lipschitz continuous in control $u$ and
disturbance(adversary control) $w$, Alg~\ref{Alg:modified_cegis}
converges in finite number of iterations for any $\epsilon > 0$.
\end{theorem}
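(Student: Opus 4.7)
The plan is to show that the outer while loop of Alg~\ref{Alg:modified_cegis} can execute only finitely many times by tracking the sequence of ``first'' refuted inputs produced by each call to WithoutRefuted and arguing they must be uniformly separated. Combined with Lemma~\ref{Lem:compute_refuted_terminates} (which already handles the inner loop), this yields termination of the overall scheme.

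First I would fix $R_{min} \eqdef \epsilon/L_u > 0$, the minimum radius guaranteed in line 8 of Alg~\ref{Alg:compute_refuted}. Let $\InputSpace_k$ denote the value of $\InputSpace$ at the start of the $k$-th outer iteration. If the loop does not break at iteration $k$, then $\responseU_k \in \InputSpace_k$ and $(\responseU_k, \responseW_k) \models \neg \phi$, so the very first call $FindSat(\responseW_k, \InputSpace_k, \neg \phi)$ inside WithoutRefuted is nonempty; let $\uu_k$ denote its result. Line 8 of WithoutRefuted then subtracts $\{\uu' \in \InputSpace_k : \|\uu_k - \uu'\|_\infty < R_{min}\}$ from $\InputSpace$, and this subtraction persists throughout the rest of the algorithm, so $\InputSpace_{k+1}$ is disjoint from the open $R_{min}$-ball about $\uu_k$.

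Next I would argue that the sequence $\uu_1, \uu_2, \dots$ is $R_{min}$-separated. For any $j < k$, the previous observation gives $\InputSpace_k \subseteq \InputSpace_{j+1}$, which misses the $R_{min}$-ball about $\uu_j$; since $\uu_k \in \InputSpace_k$, we conclude $\|\uu_k - \uu_j\|_\infty \ge R_{min}$. Because $\InputSpace$ is a bounded subset of $\Reals^{n_u}$, a standard packing argument (covering by a finite grid of side $R_{min}/2$) caps the cardinality of any such $R_{min}$-separated set, and therefore caps the number of outer iterations.

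The main obstacle is the careful bookkeeping around what progress each outer iteration actually makes: we must be confident that the witness $\uu_k$ really sits in $\InputSpace_k$ before the subtraction, and that the removed ball has radius at least $R_{min}$ independent of $|\rho^\phi(\uu_k, \responseW_k)|$. The first relies on the fact that $\responseU_k$ itself serves as a refuted input inside $\InputSpace_k$, guaranteeing the inner FindSat fires at least once; the second is a direct consequence of the radius formula in line 8 of Alg~\ref{Alg:compute_refuted}. Once these two points are in place, both the separation argument and the packing conclusion are routine.
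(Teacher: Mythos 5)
Your proof is correct, but it bounds the iteration count by packing in a different space than the paper does. The paper's argument runs on the disturbance side: it first invokes Lemma~\ref{Lem:finRects} to conclude that any $\responseU$ surviving the over-approximated removal is at least $\epsilon$-robust against every previously seen counterexample, and then uses Lipschitz continuity in $\ww$ to force each new $\responseW$ to lie at distance at least $\epsilon/L_w$ from all previous ones, so boundedness of $\DisturbanceSpace$ caps the number of counterexamples. You instead argue on the control side: every non-terminating outer iteration necessarily fires the inner \textit{FindSat} at least once (since $\responseU_k$ itself witnesses nonemptiness of the refuted set inside $\InputSpace_k$), removes a ball of radius at least $\epsilon/L_u$ centered at a point still in $\InputSpace_k$, and since $\InputSpace$ only shrinks, the centers form an $(\epsilon/L_u)$-separated set in a bounded subset of $\Reals^{n_u}$. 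Both are ``separation plus boundedness'' packing arguments, but yours is somewhat more self-contained: it needs only the monotone shrinkage of $\InputSpace$ and the radius lower bound from line 8 of Alg~\ref{Alg:compute_refuted}, and does not route through the $\epsilon$-robustness guarantee of Lemma~\ref{Lem:finRects} or the Lipschitz constant $L_w$ at all. The paper's version, by contrast, yields the two-sided complexity bound quoted in the subsequent theorem (counting both $\InputSpace$- and $\DisturbanceSpace$-side packings), which your argument alone would not recover. One cosmetic note: line 8 of Alg~\ref{Alg:compute_refuted} as printed reads $R \gets \epsilon + |\rho^\phi(\uu,\ww)|/L_u$, whereas the proof of Lemma~\ref{Lem:finRects} uses $(\epsilon + |\rho^\phi(\uu,\ww)|)/L_u$; your $R_{min} = \epsilon/L_u$ matches the latter (and the paper's own Lemma~\ref{Lem:compute_refuted_terminates}), and either reading gives a positive constant lower bound, so nothing in your argument breaks.
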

\begin{proof}
At each iteration, of Alg~\ref{Alg:modified_cegis}, we are given a
$\responseW$ and remove an epsilon over-approximation of the
$\refutedSpace$ (shown in
Lemma~\ref{Lem:finRects}). Lemma~\ref{Lem:compute_refuted_terminates}
guarantees that this will halt in finite time. Next, a $\responseU$ is
computed with maximum robustness w.r.t $\responseW$. If no satisfying
assignment is found, the loop terminates (and thus terminates in a
finite number of iterations). If the a satisfying assignment is found,
then because $\responseU$ was not thrown out during the
over-approximation of $\refutedSpace$, $\responseU$ must have
robustness greater than or equal to $\epsilon$. Thus, to refute
$\responseU$, the next $\responseW$ must, by the Lipschitz bound (as
in the proof of Lemma~\ref{Lem:finRects}) be a minimum distance
$R_{min} = \frac{\epsilon}{L_w}$ away from the previous $\ww$. 
Thus, as in Lemma~\ref{Lem:finRects}, at each iteration we require
$\responseW$ to be $R_{min}$ away from all previous counter
examples. We have assumed $\DisturbanceSpace$ to be bounded, thus we
only explore a finite number of counter examples, terminating in a
finite number of iterations.
\end{proof}

Next, we turn to the complexity of Alg~\ref{Alg:modified_cegis}.
\begin{theorem}
Alg~\ref{Alg:modified_cegis} calls the
FindSat Oracle at most
$$O\left ( \prod_i^{n_u} \frac{L_w^H}{\epsilon^H} |\DisturbanceSpace_i|^H
+ \prod_i^{n_w} \frac{L_w^H}{\epsilon^H} |\InputSpace_i|^H\right)$$
\end{theorem}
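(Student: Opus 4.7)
The plan is to bound the total number of FindSat calls by partitioning them into the queries that return a $\uu$ and the queries that return a $\ww$, and then applying an $\ell_\infty$ packing argument to each class. Because the cost-accounting in the bound is additive (rather than a product of ``outer iterations'' times ``inner iterations''), the key insight I would rely on is that $\InputSpace$ is shrunk monotonically across \emph{all} invocations of WithoutRefuted, so the sequence of $\uu$'s ever returned across the entire execution is globally pairwise separated, not just within a single WithoutRefuted call.

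For the $\uu$-queries, I would argue as follows. Each FindSat call inside WithoutRefuted either returns $\bot$ (and WithoutRefuted terminates) or returns a fresh $\uu$ around which a ball of radius at least $\epsilon/L_u$ is excised from $\InputSpace$, exactly as in the proof of Lemma~\ref{Lem:compute_refuted_terminates}. Since $\InputSpace$ is never enlarged, every $\uu$ subsequently found by FindSat---whether inside a later WithoutRefuted or on line~6 of Alg~\ref{Alg:modified_cegis}---must lie outside all previously removed balls, and is therefore at $\ell_\infty$-distance at least $\epsilon/L_u$ from every prior $\uu$. (The satisfying $\responseU$ produced on line~6 is itself refuted by the ensuing $\responseW$, so it lies in the ball excised by the very next WithoutRefuted; this closes the induction.) A standard volumetric packing bound in $\ell_\infty$ over $\InputSpace^H \subset \Reals^{n_u\cdot H}$ then caps the total number of such $\uu$-queries by $\prod_{i=1}^{n_u}\bigl(L_u|\InputSpace_i|/\epsilon\bigr)^H$, which matches the second term of the claimed bound up to constants (and up to the $L_w$ vs.\ $L_u$ typography in the statement).

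For the $\ww$-queries, I would reuse the separation argument inside the proof of Theorem~\ref{thm:modified_cegis_finite_iter}: any satisfying $\responseU$ produced by the outer FindSat has robustness at least $\epsilon$ against the current $\InputSpace$ (since its $\epsilon$-neighborhood was not excised), so by the Lipschitz bound $L_w$, any refuting $\responseW$ must sit at $\ell_\infty$-distance at least $\epsilon/L_w$ from every previous counterexample. Packing $\DisturbanceSpace^H$ at that resolution gives $\prod_{i=1}^{n_w}\bigl(L_w|\DisturbanceSpace_i|/\epsilon\bigr)^H$ distinct $\ww$-queries. Summing the two classes, and observing that each outer iteration and each inner iteration performs only $O(1)$ FindSat calls of each type, yields the stated $O(\cdot)$ bound. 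The main technical obstacle is the cross-iteration bookkeeping for the $\uu$-queries: one must take care to charge the satisfying $\responseU$'s from the outer loop against the same global packing budget as the refuting $\uu$'s inside WithoutRefuted, which is what avoids the naive multiplicative blow-up and gives the additive form of the bound.
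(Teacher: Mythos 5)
Your proposal is correct and follows essentially the same route as the paper: both arguments reduce the count of oracle calls to the maximum number of pairwise $R_{min}$-separated points (with $R_{min}=\epsilon/L$) that fit in the bounded sets $\InputSpace^H$ and $\DisturbanceSpace^H$ under the $\ell_\infty$ norm, and then sum the two packing bounds. Your treatment is in fact somewhat more careful than the paper's on the cross-iteration bookkeeping (why samples remain globally separated across distinct invocations of \textsc{WithoutRefuted} and why the outer-loop $\responseU$'s are charged to the same budget), and you correctly flag the $L_w$ versus $L_u$ slip in the second term of the stated bound.
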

\begin{proof}
Recall that the termination of Alg~\ref{Alg:modified_cegis} and
Alg~\ref{Alg:compute_refuted} rests on the following question: Is the
maximum number points one can place in $\InputSpace$ (and
$\DisturbanceSpace$) s.t. they are all $R_{min}$ apart. We now show
explicitly the maximum number of samples (and thus Oracle
calls). Consider first $\InputSpace$, letting $|\InputSpace|_i$ denote
the length of $\InputSpace$ in the $i$th dimension (recall that
$\InputSpace$ is bounded and embedded in $\Reals^{n_u}$). Observe that
under the infinity norm, each point induces an $n_u$-dimensional
square, $S$, of edge length $2 R_{min}$ that no other point can lie
in.  Further, observe that along each edge one can pack 3 points
$R_{min}$ apart. We can optimally pack a square lattice (with each leg
of the lattice having length $R_{min}$) with $3^{n_u}$ points into
$S$. W.L.O.G assume that $|\InputSpace_i|$ is a multiple of $R_{min}$.
As squares lattices tessellate, $\InputSpace$ can be packed by
introducing a $3^d$ lattice around each point in our original
lattice. Since each point forms a locally optimal packing and because
the space is entirely filled this is an optimal packing. Along each
lattice row of length $l$ there are $l/R_{min} - 1$ points. Taking the
product over each axis (including all $H$ copies due to time) yields
$\prod_i^{n_w} \frac{L_w^H}{\epsilon^H} |\InputSpace_i|^H$ points. A
similar argument can be made for $\DisturbanceSpace$. Thus, both the
system and environment will run out of choices in
$$O\left ( \prod_i^{n_u} \frac{L_w^H}{\epsilon^H}
|\DisturbanceSpace_i|^H + \prod_i^{n_w} \frac{L_w^H}{\epsilon^H}
|\InputSpace_i|^H\right)$$ FindSat calls.
\end{proof}

\begin{theorem}[Soundness]
If Alg~\ref{Alg:modified_cegis} returns a $\responseU \neq \bot$, then
$\responseU$ is dominant.
\end{theorem}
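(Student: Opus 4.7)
The proof plan is short because soundness, unlike termination, hinges entirely on how the loop exits rather than on any Lipschitz bookkeeping. First I would identify the exit condition: Alg~\ref{Alg:modified_cegis} returns a non-$\bot$ value only when control leaves the \textbf{while} loop through the loop guard $\responseW \neq \bot$ becoming false. The \textbf{break} branch on line~7 returns $\responseU = \bot$, so a returned $\responseU \neq \bot$ must come from the branch in which the final assignment on line~9 produced $\responseW = \bot$.

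Next I would invoke the specification of the $FindSat$ oracle. Line~9 calls $FindSat(\responseU, \DisturbanceSpace, \neg\phi)$, and by assumption this returns $\bot$ iff no $\ww \in \DisturbanceSpace$ satisfies $(\responseU,\ww) \models \neg \phi$. Taking contrapositives, this is exactly the statement that $(\responseU,\ww) \models \phi$ for every $\ww \in \DisturbanceSpace$, which matches the definition of dominant strategy from Sec.~\ref{sec:Prelim}. Hence the returned $\responseU$ is dominant.

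The only potential subtlety is whether the subsequent $\mathit{WithoutRefuted}$ call on line~10 (executed with $\responseW = \bot$ before the loop guard is re-checked) could invalidate the conclusion. It cannot: line~9's query is over the entire unmodified $\DisturbanceSpace$, which is never shrunk, so the universal quantification over $\ww$ already covers all possible adversary moves. Shrinking $\InputSpace$ only ever discards candidate controllers; it neither changes the semantics of $\phi$ nor the set of adversary actions against which $\responseU$ was certified. Therefore the returned $\responseU$ remains dominant regardless of what happens to $\InputSpace$ afterwards.

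I do not anticipate a hard step here; the argument is essentially a one-line unrolling of the exit condition plus the specification of $FindSat$. The only place one could slip up is by confusing the $\epsilon$-over-approximation of Lemma~\ref{Lem:finRects}, which is relevant to completeness (what controllers can be missed) but not to soundness (what a surviving, verified controller guarantees).
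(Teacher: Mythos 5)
Your proposal is correct, and it is in fact a cleaner argument than the one the paper gives. The paper's entire proof is ``Follows directly from Lem~\ref{Lem:finRects},'' i.e., it leans on the guarantee that $\mathit{WithoutRefuted}$ over-approximates $\refutedSpace$ and under-approximates $\epsRefutedSpace$. You instead argue directly from the control flow: a non-$\bot$ return can only happen when the line-9 query $FindSat(\responseU, \DisturbanceSpace, \neg\phi)$ returns $\bot$, and since that query ranges over the \emph{full, never-shrunk} $\DisturbanceSpace$, its failure to find a counterexample is literally the statement $\forall \ww\ :\ (\responseU, \ww) \models \phi$, i.e., dominance. This is the standard CEGIS soundness argument (the final verification query is the certificate), and it needs nothing about Lipschitz bounds, $\epsilon$, or the quality of the over-approximation. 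Your closing observation is also exactly right and worth keeping: Lem~\ref{Lem:finRects} is doing work for $\epsilon$-completeness (which controllers may be wrongly discarded) and for termination, not for soundness; if anything, the paper's citation of it here obscures that the soundness certificate comes from the adversary-side oracle call, not from the pruning step. The one cosmetic gap in both your argument and the paper's is the degenerate case where the loop body never executes (e.g., the initial sample is $\bot$), in which case the returned value is not a freshly certified controller; this is an artifact of the algorithm's presentation rather than of your reasoning.
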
  
\begin{proof}
Follows directly from Lem~\ref{Lem:finRects}.
\end{proof}

\begin{theorem}[$\epsilon$-Completeness]
If $\InputSpace \setminus \epsRefutedSpace \neq \emptyset$ then
Alg~\ref{Alg:modified_cegis} will return
$\responseU \in \notRefutedSpace$.
\end{theorem}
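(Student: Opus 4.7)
The plan is to derive $\epsilon$-completeness by combining the finite-termination guarantee of Theorem~\ref{thm:modified_cegis_finite_iter} with the one-sided inclusion $\InputSpace \setminus B \subset \epsRefutedSpace$ from Lemma~\ref{Lem:finRects}. Intuitively, \textsc{WithoutRefuted} only ever discards inputs whose robustness is below $\epsilon$, so any controller that is uniformly $\epsilon$-robust survives every pruning step and keeps \textit{FindSat} non-empty. I read the hypothesis $\InputSpace \setminus \epsRefutedSpace \neq \emptyset$ in the uniform sense: there exists some $\uu^\star \in \InputSpace$ such that $\rho^\phi(\uu^\star, \ww) > \epsilon$ for every $\ww \in \DisturbanceSpace$, i.e., $\uu^\star$ is $\epsilon$-robustly dominant (this is the natural reading since the algorithm encounters many different $\responseW$'s).

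The first step is to observe that Alg~\ref{Alg:modified_cegis} can only exit in two ways: (i) through the inner \textbf{break} when \textit{FindSat} returns $\responseU = \bot$ on line 6, or (ii) through the while-loop guard when \textit{FindSat} returns $\responseW = \bot$ on line 9. Case (ii) already implies dominance of the returned $\responseU$ by the soundness theorem, so I just need to rule out case (i) under our hypothesis.

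Next, I would argue inductively that $\uu^\star$ is never removed from $\InputSpace$. Initially $\uu^\star \in \InputSpace$. On any call $\InputSpace \gets \mathit{WithoutRefuted}_\epsilon(\responseW, \InputSpace, \phi)$, Lemma~\ref{Lem:finRects} tells us the discarded set lies inside $\epsRefutedSpace$ for the particular $\responseW$ being processed, i.e., it is contained in $\{\uu : \rho^\phi(\uu, \responseW) \le \epsilon\}$. Since $\rho^\phi(\uu^\star, \responseW) > \epsilon$ by hypothesis, $\uu^\star$ is not in this set and hence survives the pruning. Therefore at every iteration, \textit{FindSat}$(\responseW, \InputSpace, \phi)$ has at least the witness $\uu^\star$ (whose satisfaction of $\phi$ against $\responseW$ follows from positive robustness), so it cannot return $\bot$. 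This rules out exit case (i).

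Combining the two observations: by Theorem~\ref{thm:modified_cegis_finite_iter} the algorithm halts in finitely many iterations, and we have just shown it cannot halt via the \textbf{break}, so it must halt via $\responseW = \bot$, returning a $\responseU$ that is dominant, and in particular $\responseU \in \notRefutedSpace$ for every $\responseW$. The only delicate point, and the main obstacle, is that Lemma~\ref{Lem:finRects} bounds the pruning in terms of the \emph{current} $\responseW$ while the hypothesis must be strong enough to cover every $\responseW$ the adversary might pick; the uniform reading of the hypothesis provides exactly this, and without it one would instead need a per-iteration invariant that some $\epsilon$-robust controller exists against each generated $\responseW$, which would be strictly weaker but sufficient by the same inductive argument.
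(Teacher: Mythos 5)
Your proof is correct and takes essentially the same route as the paper, whose own proof is just the one-line citation of Lemma~\ref{Lem:finRects} and Theorem~\ref{thm:modified_cegis_finite_iter}; you supply exactly the missing details (the surviving $\epsilon$-robust witness keeps \textit{FindSat} from returning $\bot$, and finite termination then forces exit through $\responseW = \bot$). Your observation that the hypothesis must be read uniformly over all $\responseW$ for the cumulative pruning argument to go through is a genuine clarification of an ambiguity the paper leaves implicit, but it does not change the substance of the argument.
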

\begin{proof}
Follows directly from Lem~\ref{Lem:finRects} and Thm~\ref{thm:modified_cegis_finite_iter}
\end{proof}

\mypara{Lipschitz Bounds for Linear Dynamics and STL Predicates}
In the previous section, we required computing Lipschitz bounds of
$\rho^\phi$ w.r.t changes in $\uu$ and $\ww$ in order to
guarantee termination. We now show how to automatically compute these
bounds for linear systems of the form
\begin{equation}
  x_{n+1} = A x_n + B u_n + C w_n
\end{equation}
subject to the STL specification $\varphi$ with $\mu(x) = D x > 0$ for
some matrix $D$.

We start by observing that if $\rho^\phi(x)$ is Lipschitz-bounded in
$x$ by $L_x$ and $x(\uu, \ww)$ is Lipschitz-bounded in $\uu$ by $L_u$,
then $\rho^\phi$ is Lipschitz bounded in $\uu$ by $L_u\cdot
L_x$. 

To compute $L_x$, we begin by unrolling the states over a horizon $H$:
\begin{equation} \label{Eqn:dxdu}
  x_{H}  = A x_0 + \sum_{i= 0}^{H-1} A^{H-1 - i}  B u^i + \sum_{i = 0}^{H-1} A^{H-1 - i} C w^i
\end{equation}
Differentiating w.r.t. $\uu$ yields:
\begin{align*}
	\nabla_{\bf u}x_{H} = \text{diag}(B, AB, A^2 B, \dots, A^{H-1}B) 
\end{align*}
Letting $\sigma(P)$ be the set of singular values of $P$, a valid
Lipschitz bound is then:
\begin{equation}
L_x = \max_{s \in \sigma(\nabla_{\uu}x_H)}(|s|) 
\end{equation}

The case for $L_u$ is similar. Observe that the rate of change in
$\rho^\phi$ depends only on which $\mu(x)$ the nested $\sup$ or $\inf$ in \eqref{eq:robust_STL}
selects. Thus, we can upper bound $L_u$ by taking the maximum rate of
change across predicates. If $\mu(x)$ is linear (as we assume in this work),
then its rate of change is again upper bounded by its singular
values. Thus 
\begin{equation}
L_u = \max_{\mu}(\max_{s \in \sigma(\mu)}(|s|))
\end{equation}
A similar argument over $L_w$ gives the Lipschitz bound of $\rho^\phi$ w.r.t. $\ww$ .

\begin{example}[Continuous RPS]\label{Example:Cont_RPS}
We can represent the continuous RPS in
Eqn~\ref{eq:Continuous_RPS_Dynamics} as a linear system with a
combined state space, $x = [i\ j]^T$, as,
\begin{equation}
  \begin{split}
    x_{n+1} = \left[\begin{matrix} 1 & 0 \\ 0 & 0
      \end{matrix}\right] u_n + \left[\begin{matrix} 0 & 0 \\ 0 & 1
      \end{matrix}\right] w_n
  \end{split}
\end{equation}
and each atomic predicate is affine with the form
 \begin{equation}
  \begin{split}
    \mu(x) = \left[\begin{matrix} 1 & 0 \\ 0 & 1
      \end{matrix}\right] x - \alpha
  \end{split}
\end{equation}
Analysis of the singular values gives us, $\frac{\partial x}{\partial
  u} \leq 1$ and $\frac{\partial \rho^\varphi}{\partial x} \leq 1$ and
thus $\frac{\partial \rho^\varphi}{\partial u} \leq 1$.  We can thus use the
Lipschitz bound $L_u = 1$.  In Fig~\ref{Fig:Evolve_U} we throw away
squares of length $\rho^\phi(\uu^*, \ww^*) + \epsilon$ for every
counterexample $\ww^*$ we find for the current dominant strategy
$\uu^*$. Alg~\ref{Alg:modified_cegis} returns $\uu^* = \bot$ and we
conclude that there is no dominant strategy for the continuous RPS.
\end{example}	
\begin{figure}[h]
  \centering
  \includegraphics[scale = 0.35]{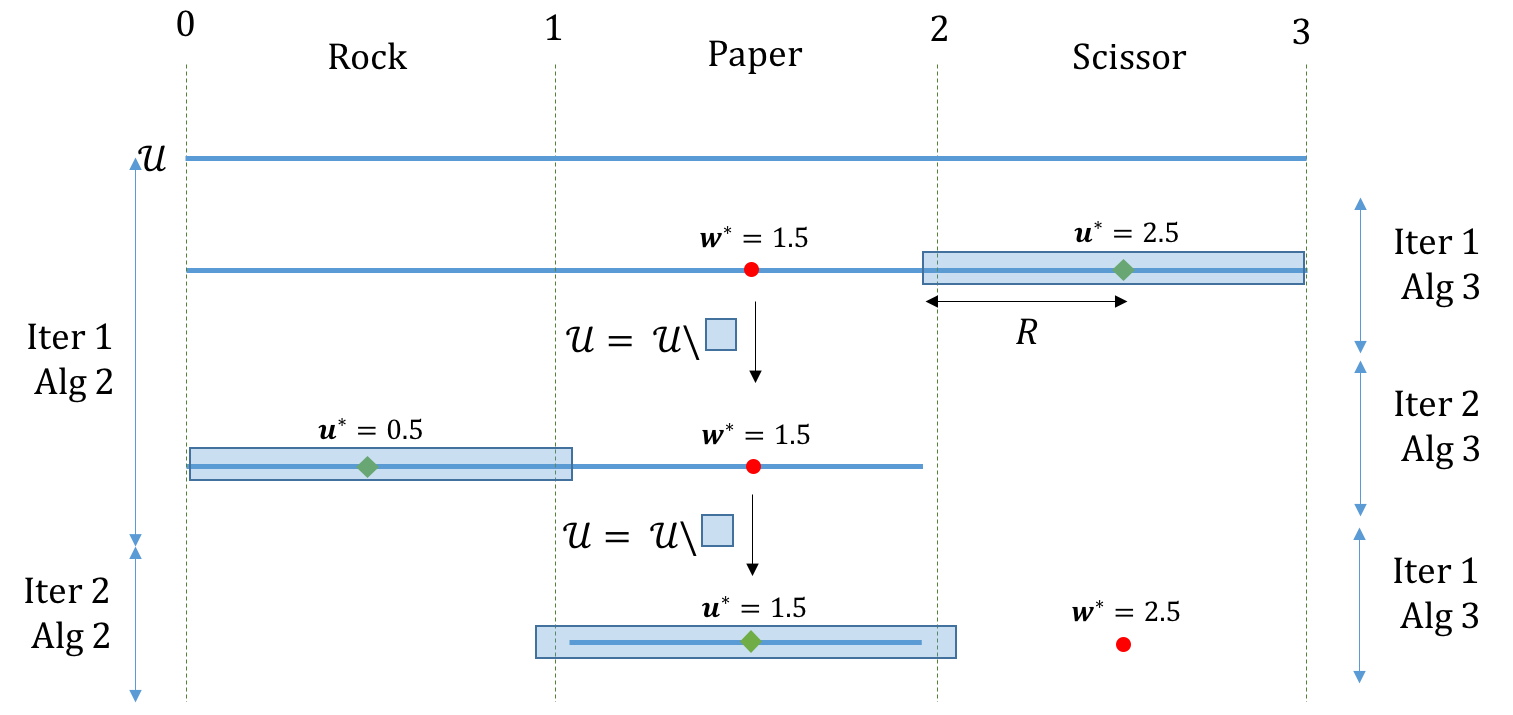}
  \caption{Illustration of Alg~\ref{Alg:modified_cegis} in Ex~\ref{Example:Cont_RPS}}
  \label{Fig:Evolve_U}
\end{figure}
\begin{remark}
\mypara{Notes on Optimizations and Performance}
To be faithful to the original implementation of the memoryless oracle
presented in~\cite{Raman15} we need to replace the FindSat oracle with
an Optimization oracle. This means that each counterexample is
maximally refuting (implying larger magnitude of $\rho$ and therefore quicker
convergence in terms of number of oracle calls), but the oracle calls
themselves may be much slower. These kinds of trade-offs make it difficult
to directly justify one or the other, and thus for ease of exposition, we have
only presented a simple satisfaction oracle.

One can imagine modifying the counterexample generalization in many
ways. One easy modification is take the counterexample square with
radius $R_{min}$ and try to make it larger. To do so, one can binary
search for the largest radius, such that all inputs are refuted.  By
fixing $R$ each of these queries is a single oracle call.  Note
however, this results in a logarithmic blowup to oracles calls.

We see this technique as being orthogonal and complementary to other
conflict analysis techniques. Additional performance gains are to be
found in, e,g,, syntactic analysis of $\varphi$ to find more problem
specific conflict lemmas.
\end{remark}

Before ending this section, we note that one can make the following
transformation: $\forall \ww \exists \uu\ .\ (\uu, \ww) \models
\phi \mapsto \neg (\exists \ww \forall \uu\ .\ (\uu, \ww) \models
\neg \phi)$ to handle pure response games, used to test the existence
reactive strategies. In such cases, we need to under approximate the
subset of $\DisturbanceSpace$ that $\uu$ is robust to. This is taken
care of automatically, by overapproximating $\neg \phi$.

\section{Reactive Hierarchy}
\label{sec:control_strategies}
Failing to find a dominant strategy for $p_1$, one may reasonably
wonder if if there exists a winning reactive strategy of the form:
\begin{equation}\label{eq:RPS_reactive_strategy_query}
 \exists i_1 \forall j_1 \exists i_2 \forall j_2\ .\ 
 (\bf{i}, \bf{j}) \models \varphi_{RPS}
\end{equation}
One technique for searching for such a controller (particularly over
arbitrary horizons) is Receding Horizon Control.

\mypara{Receding Horizon Controller} \emph{Model Predictive Control}
(MPC) or \emph{Receding Horizon Control} (RHC) is a well studied
method for controller synthesis of dynamical
systems~\cite{morari1993model,garcia1989}. In receding horizon
control, at any given time step, the state of the system is observed
and and the system plans a controller for next a Horizon $H$. The
first step of the controller is then applied, the environment response
is observed, and the system replanes for the next $H$ steps. This,
allows the system to react to what steps the environment actually
performs. MPC has been extended for satisfying $G(\phi)$, where $\phi$
is a bounded Signal Temporal Logic formula with scope
$H$~\cite{Raman14}. At each step, one searches for a dominant
controller (as in the previous section) and applies the first
step. One of the key contributions of~\cite{Raman14} is that, one
needs to be careful that future actions are consistent with previous
choices. A reframing of the observation in ~\cite{Raman14} suggests
that is sufficient to simply satisfy $\G(\G_{[-H, H]}(\phi))$.  This
new specification asserts that $\phi$ holds for the next $H$ steps,
and all choices we make are consistent with the previous time
steps. Thus, no additional machinery is required.

That said, despite all of its benefits (tractability, well developed
theory, reactivity), Receding Horizon Control using dominant
controllers may not always be feasible. Further, it provides no
mechanism to find a certificate that this strategy truly satisifies
$G(\phi)$.

In this work, we attempt to take a step towards this certificate by
noticing that the Lipschitz bounds imply nearby strategies produce
similar results. This means we can extract decision trees. A
certificate that a furture work may then be able to provide is a
scheme by which the decision tree returns to a previous state
(resulting in a lasso). Motivated by such applications, we explore how
to extract reactive controllers for a bounded horizon.

To facilitate development, let us return to our toy example, Rock,
Paper, Scissors.
\begin{figure*}[h]
  \centering
  \subfigure[Dominant strategy for $p1$ and $p2$]{\label{Fig:mod_RPS_rules}
    \includegraphics[scale=0.35]{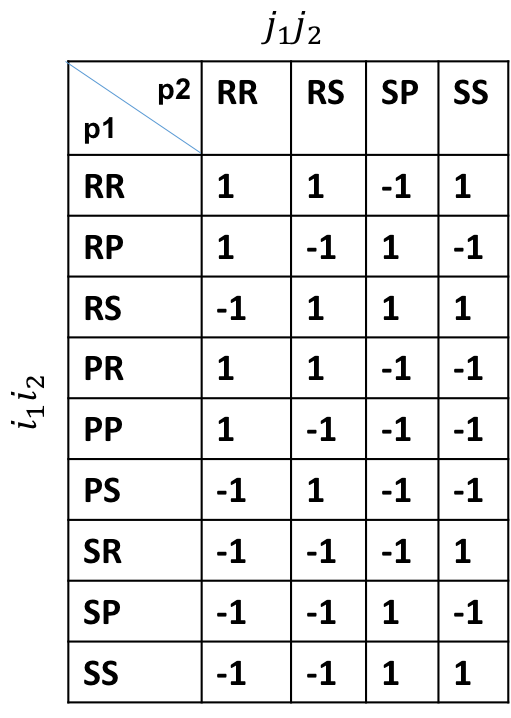}}
\hspace{2em}
\subfigure[Existence of reactive strategy for $p1$]{\label{Fig:reactive_RPS}
  \includegraphics[scale=0.35]{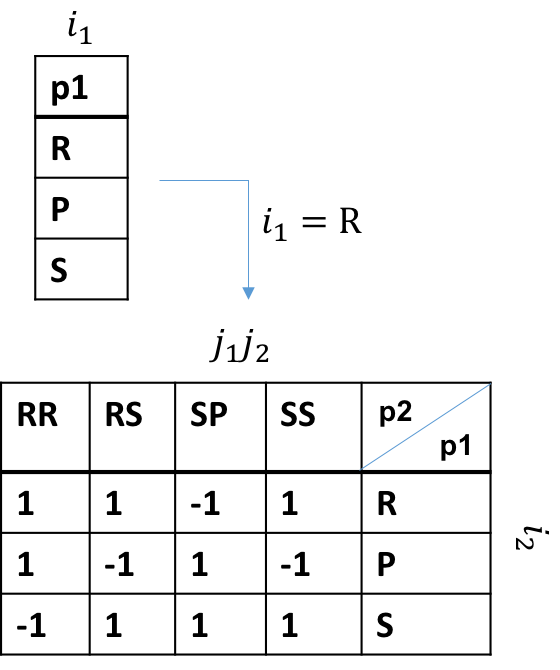}}
\hspace{2em}
\subfigure[Reactive strategy for $p1$]{\label{Fig:reactive_R}
  \includegraphics[scale=0.35]{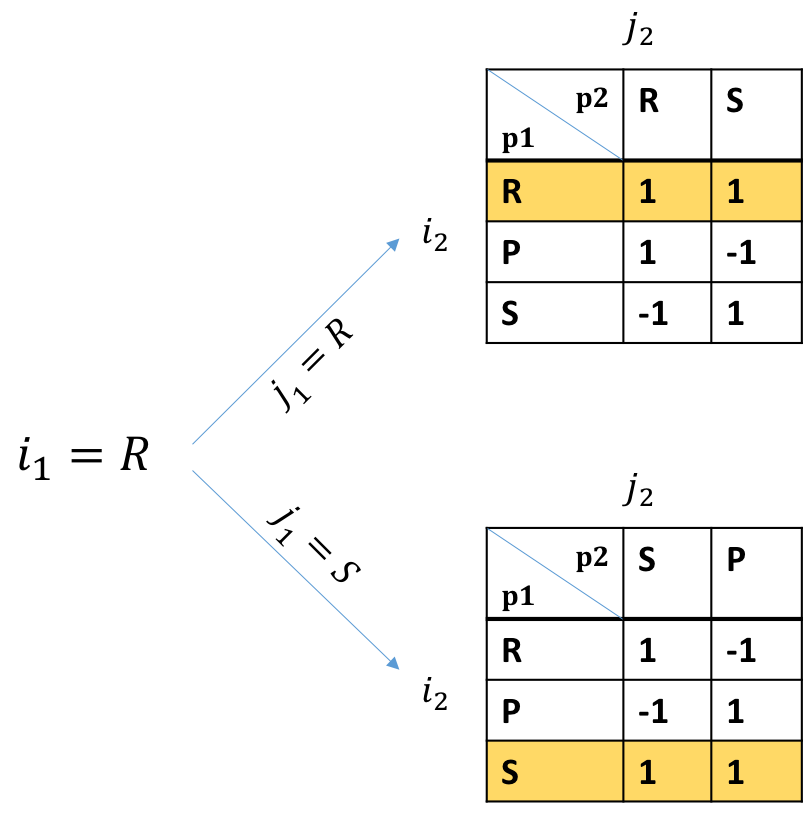}}
\caption{Rock Paper Scissors rules with assumptions on $p_2$}
\label{Fig:RPS_example2}
\end{figure*}
\begin{example}\label{Ex:RPS3}
\mypara{Reactive Rock, Paper, Scissors} \label{sec:with_assump}
Let us slightlt alter our discrete Rock, Paper, Scissors example by
constraining the $p_2$'s dynamics.
\begin{equation} \label{eq:RPS_with_dynamics}
\begin{split}
  & \phi_{j}^R:\ \G \bigg(j=R \implies \X (j=\{R, S\}) \bigg)\\
  & \phi_{j}^P:\ \G \bigg(j=P \implies \X (j = \{P, R\})\bigg)\\
  & \phi_{j}^S:\ \G \bigg(j=S \implies \X (j = \{S, P\}) \bigg) \\
  & \phi_{j}^{init}:\ j = \{R, S\}
\end{split}
\end{equation}
$\phi_{j}^R, \phi_{j}^P, \phi_{j}^S$ enforces the moves $p_2$ can make
at consecutive time steps and $\phi_{j}^{init}$ is the initial play of
$p_2$. The overall assumption, $\phi_{RPS}^{a} = \phi_{j}^R \wedge
\phi_{j}^P \wedge \phi_{j}^S \wedge \phi_{j}^{init}$.

The specification is given as:
\begin{equation}
\phi  = \phi_{RPS}^{a} \rightarrow \phi_{RPS}
\end{equation} 

Let us assume we play for two turns and follow a similar procedure as
last time.  A quick scan of Fig~\ref{Fig:mod_RPS_rules} shows that
there's no all 1's row. Thus, there is no dominant
strategy. Similarly, because there's no all -1 row, there's no
dominant strategy for the environment. Thus, one has hope in exploring
for a reactive strategy.

To do so, we first ask the question: Does there exist a first move
$i_1$ such that if $p_2$ then reveals both $j_1$ and $j_2$, $p_1$
could find a $i_2$ that satisfies the specification. Formally:
\begin{equation} \label{eq:RPS_non_causal_reactive_query}
 \exists i_1 \forall j_1, j_2 \exists i_2\ .\ (\bf{i}, \bf{j})
 \models \varphi_{RPS}
\end{equation}
The motivation for first solving
eq~\ref{eq:RPS_non_causal_reactive_query}, is that compared to the
fully reactive game, it has less quantifier alternations and
is thus expected to be ``easier''. Moreover, this query lets one
eliminate $i_1$ choices that even with future knowledge of $j_2$
couldn't win. As we can see in Fig~\ref{Fig:reactive_RPS}, if $i_1 =
R$, then $j_2$ has no winning strategy. Proceeding in a Depth First
Search fashion, we see if $i_1 = R$, $\forall j_1 \exists i_2 \forall
j_2\ .\ (\bf{i}, \bf{j}) \models \phi$.  Fig~\ref{Fig:reactive_RPS}
shows that if $i_1 = R$, then if $j_1 = R$, $i_2$ should be
$R$. Similarly, if $j_2 = S$ then $i_2$ should be $S$. Or simply,
$i_2(j_1) = j_1$. Thus, eq~\ref{eq:RPS_reactive_strategy_query} is
satisfied.Fig~\ref{Fig:reactive_R} shows the computed strategy.
\end{example}
\begin{remark}
While in the discrete setting, when $|\InputSpace|$ is small, doing
the queries in this order may not save much effort. However, if
$|\InputSpace|$ is very large (or even infinite), then pruning the
$\InputSpace$ using easier queries, has huge benefits.
\end{remark}
We now turn to systematizing the technique we applied in the example.
Recall that a dominant strategy takes the form:
\begin{equation}\label{Eqn:DomGame}
\exists \uH \forall \wH\ .\ (\uH, \wH) \models \phi
\end{equation}
where, $\uH$ and $\wH$ are, respectively, the system ($p_1$) control
and environment ($p_2$) disturbance over a horizon $H$.

Note that this quantification means that the controller is not aware
in advance of the disturbance over the time horizon $H$. As such, the
controller is at a complete disadvantage while planning its
actions. Adding a reaction by allowing a quantifier alternation gives
the system more information allowing for more winning strategies. We
show the hierarchy of games (ordered by the number of winning
controllers) in Fig~\ref{Fig:Controllers}.
\begin{figure}[h]
  \centering
  \includegraphics[height=1.5in]{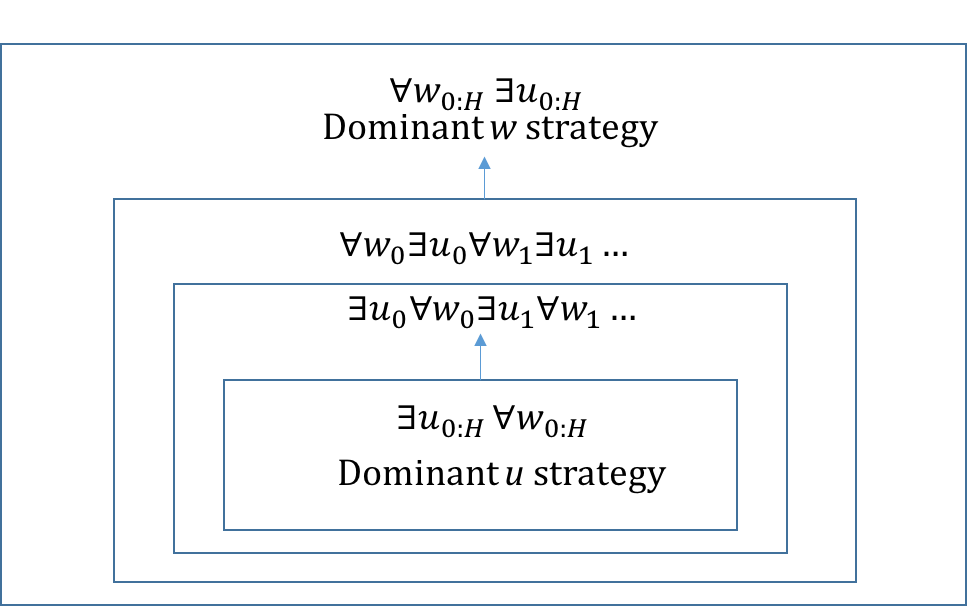}
  \caption{Control strategies for different games}
  \label{Fig:Controllers}
\end{figure}

Note however that only certain games yield controllers which are
implementable in our setting. As the players reveal their solutions
simultaneously, $u_k$ can only depend on plays (by both the system and
the environment) before round $k$. We call these games the ``causal''
set and will define them more precisely in a moment. If each input
depends on all previous moves, we call this a fully reactive
controller, and in general these controllers solve:
\begin{equation} \label{Eqn:FullyReactive}
 \exists \mathbf{u}_0 \forall \mathbf{w}_0 \exists \mathbf{u}_1 \forall \mathbf{w}_1 \dots \exists \mathbf{u}_{H-1} \forall \mathbf{w}_{H-1}\ .\ (\uu^H, \ww^H) \models \phi
\end{equation}
If a solution to Eqn~\ref{Eqn:FullyReactive} does not exist, then
there exists no control when the system plays first.
\begin{remark}
There may still be a control if the environment plays first. This is a
simple extension of the work presented, but it handling such cases
complicates exposition.
\end{remark}

Now let us define the set of games under consideration:
\begin{definition}[Order Preserving Games]
Consider the alphabet:
\[\Sigma \eqdef \{\exists u_1, \exists u_2, \dots, \exists u_H, \forall w_1, \forall w_2, \dots, \forall w_H\}
\]
\begin{itemize}
\item Let $\sigma_u$ denote the string
$\exists u_0 \exists u_1 \dots \exists u_{H-1}$
\item Let $\sigma_w$ denote the string $\forall w_0 \forall w_1 \dots \forall w_{H-1}$.
\end{itemize}

We define the set of Order Preserving Games $Q$ as all possible
interleavings of $\sigma_u$ and $\sigma_w$.
\end{definition}
This set is called order preserving, since by construction the
elements of $\sigma_u$ and $\sigma_w$ are ordered temporally, and thus
their interleavings preserve the order. Evaluation of such a game is
denoted by:
\begin{definition}
Given $q \in Q$
\begin{equation}
 \llbracket q \rrbracket \eqdef q \in Q,\ q\
.\ (\uH, \wH) \models \phi
\end{equation}
\end{definition}
Next, we define the previously mentioned causal games:
\begin{definition}
The causal subset of $Q$ is defined by:
\begin{equation}
A \eqdef \{q : q \in Q \wedge \text{pos}(q, u_k) < \text{pos}(q, w_k)\}
\end{equation}
where $\text{pos}(q, a)$ gives the position of $a$ in string $q$.
\end{definition}
Finally, it's useful to define two string operations:
$extend(q, k)$ and $reveal(q, k)$ defined as
follows: $extend(q, k)$ moves $\exists u_{k+1}$ immediately
after $\exists u_k$ in $q$. $reveal(q, k)$ moves all
environment moves up to round $k-1$ that appear after $\exists u_k$
in $q$ immediately before $\exists u_k$.

\mypara{Game Transition System}We now construct a Labeled Transition
System, $\mathcal{D}$, specified by a tuple of (nodes, edges, labels,
initial state)
\begin{equation}
  \mathcal{D} = (\sigma_u \times (Q\cup\{\bot, \mathbb{C}) \}, E, \{False, True\}, a_0)
\end{equation}
that formalizes the progression of games seen in the Rock, Paper,
Scissors example. $Q$ is again the set of games, $\{\bot,
\mathbb{C}\}$ are sink nodes representing no causal control does not
exists and does exists resp. $\sigma_u \times Q$ represents the set of
decision variable and games combinations. $a_0$ is tuple of the
easiest non-dominant game where $p_1$ plays first and the first (temporal)
decision.
\begin{equation}
a_0 = (\exists u_1, \exists u_1 \forall \wH \exists u_2 \ldots u_H)
\end{equation}

and the edges, $E$, are defines as follows:
\begin{definition}
  An edge, $e \in E$, is a tuple \[((\exists u_k, q), (\exists u_j, q'), L)\]
  An edge, $e$ is in $E$ iff one of the following is True:
  \begin{equation}\label{Eqn:PathToCont}
    L \wedge q_1 \in A  \wedge q_2 = \mathbb{C}
  \end{equation}
  \begin{equation} \label{Eqn:PathToBot}
    \neg L \wedge \text{pos}(q_1, \ww_{k}) = \text{pos}(q_1, u_k) - 1 \wedge q_2 = \bot
  \end{equation}
  \begin{equation}\label{Eqn:UpdateDom}
    L \wedge q \notin A \wedge q_2 = extend(q_1, k) \wedge j=k+1
  \end{equation}  
  \begin{equation}\label{Eqn:UpdateReact}
    \begin{split}
      &\neg L \wedge \text{pos}(q_1, \ww_{k}) > \text{pos}(q_1, u_k)\\
      &\wedge  q_2 = reveal(q_1, k) \wedge j=k
    \end{split}
  \end{equation}
\end{definition}
Informally, movement through $\mathcal{D}$ is as follows: We first
compute $\llbracket q \rrbracket$. If it evaluates to True, we either
have a causal controller, or we try to avoid adding an extra
quantifier by extending the dominant fragment. If $\llbracket q
\rrbracket$ evaluates to False, then we allow $u_k$ to depend on moves
the environment has played since the last dominant fragment. If there
aren't any, we move to bottom, since this implies we must see the
environments next move to proceed with this prefix of a decision
tree. We show $\mathcal{D}$ for $H = 3$ in Fig~\ref{Fig:Tree}.
\begin{figure}[h]
  \centering
  \includegraphics[scale=0.35]{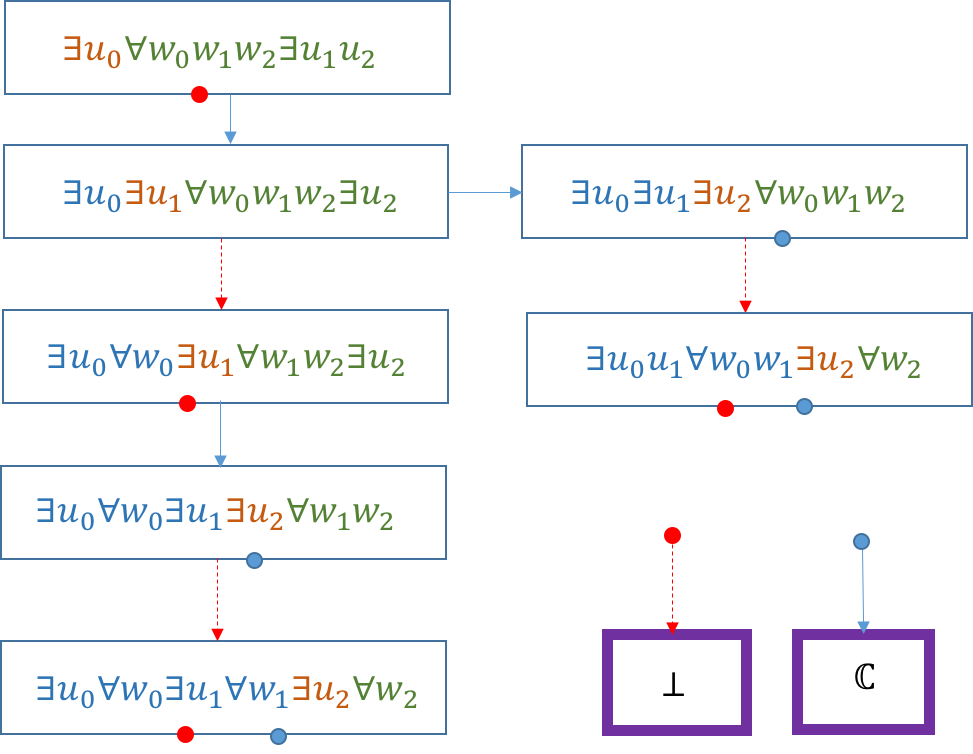}
  \caption{Game graph for $H$ = 3. The orange substring is the decision variable. Red marks False Edge. Blue marks True Edge.}
  \label{Fig:Tree}
\end{figure}
Now let us show that moving through our transition system takes at
most $2H$ steps. We start by defining a measure:
\begin{definition}[$\alpha$]
Let $u_k$, $q$ be the decision variable and game of node
$a$. Pattern matching $q = p \exists u_k s$, we define $\alpha(a)$
to be the number of characters of $\sigma_u$ that appear in the string
$\exists u_k s$.

For $q = \exists u_1 s$, we thus have $\alpha(q) = H-1$.

For $q = \bot$ or $q = \mathbb{C}$, we define $\alpha(a) = 0$.
\end{definition}
Next we show that our measure is non-increasing.
\begin{lemma}\label{Lem:AlphaMonotonic} $\alpha$ is non-increasing on
any path rooted on $q_0$
\end{lemma}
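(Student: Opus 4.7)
The plan is to prove Lemma~\ref{Lem:AlphaMonotonic} by a straightforward case analysis on the four edge conditions (\ref{Eqn:PathToCont})--(\ref{Eqn:UpdateReact}) given in the definition of $E$. Since non-increasingness is a local (single-edge) property, and the lemma claims it along \emph{any} path, induction on path length reduces immediately to the single-edge claim: if $(a_1, a_2, L) \in E$, then $\alpha(a_2) \le \alpha(a_1)$. I would state this as a sub-claim and then dispatch it by matching against the four cases of edge formation.

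The two sink cases are trivial: by the edge conditions (\ref{Eqn:PathToCont}) and (\ref{Eqn:PathToBot}) we have $q_2 \in \{\mathbb{C}, \bot\}$, so $\alpha(a_2) = 0 \le \alpha(a_1)$ by the definition of $\alpha$. The extend case (\ref{Eqn:UpdateDom}) is where I would spend the most care: the new decision variable is $\exists u_{k+1}$, and the tail suffix starting at $\exists u_{k+1}$ in $q_2 = extend(q_1,k)$ contains exactly the $\sigma_u$-characters $\exists u_{k+1}, \dots, \exists u_{H-1}$, i.e.\ $H-k-1$ of them, whereas the tail starting at $\exists u_k$ in $q_1$ contained $\exists u_k, \dots, \exists u_{H-1}$, which is $H-k$ of them. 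Hence $\alpha(a_2) = \alpha(a_1) - 1$. Here one should note that extending cannot inadvertently pull an earlier $\exists u_j$ (with $j \le k$) into the suffix, because by the invariant that the decision variable equals the first $\sigma_u$-character not already committed, every such $u_j$ has already been placed before $\exists u_k$ in $q_1$, and $extend$ only moves the single character $\exists u_{k+1}$.

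The reveal case (\ref{Eqn:UpdateReact}) is the subtle one conceptually but easy formally: $j=k$, so the decision variable is unchanged, and $reveal(q_1,k)$ only permutes certain $\forall w_i$ characters (sliding them to the left of $\exists u_k$). No $\sigma_u$-character is created, destroyed, or moved across the $\exists u_k$ boundary, so the multiset of $\sigma_u$-characters appearing in the suffix from $\exists u_k$ onward is preserved. Therefore $\alpha(a_2) = \alpha(a_1)$. Combining the four cases yields $\alpha(a_2) \le \alpha(a_1)$ on every edge, and then a routine induction on path length from $q_0$ establishes monotonicity along every path.

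The main obstacle, such as it is, lies in the reveal case: one must carefully argue that $reveal$ touches only $\forall w_i$ symbols and in particular does not reorder any $\exists u_j$ relative to the decision variable $\exists u_k$. This follows directly from the definition (``moves all environment moves \ldots''), but it is worth making explicit so that the step ``the count of $\sigma_u$-symbols in $\exists u_k s$ is invariant'' is visibly justified rather than merely asserted.
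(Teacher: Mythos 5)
Your proposal is correct and follows essentially the same route as the paper's proof: a case analysis over the four edge types, showing that the two sink edges drop $\alpha$ to $0$, the $extend$ edge decreases it by exactly $1$, and the $reveal$ edge leaves it unchanged because only $\forall w_i$ symbols are permuted. Your version is somewhat more explicit about the character-counting in the $extend$ and $reveal$ cases, but the underlying argument is identical.
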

\begin{proof} It suffices to show that $\alpha$ is non-decreasing
along all edges of $\mathcal{D}$. While traversing
Edge~\eqref{Eqn:UpdateDom}, the decision variable changes from $\exists
u_k$ to $\exists u_{k+1}$, and hence $\alpha$ decreases by
1. While traversing Edge~\eqref{Eqn:UpdateReact}, the decision variable
remains the same (since only variables that have already been played are
moved before the current decision variable) , and hence $\alpha$
remains the same. Traversing Edge~\eqref{Eqn:PathToBot} and
Edge~\eqref{Eqn:PathToCont} directly reduces $\alpha$ to 0. Hence, along
any edge $\alpha$ decreases by at most 1.
\end{proof}
Next, we show that $\alpha$ doesn't remain constant for more than 1
transition.
\begin{lemma}\label{Lem:TwoFalseEdge}
Any path through $\mathcal{D}$ with two consecutive edge labels False,
lead to $\bot$
\end{lemma}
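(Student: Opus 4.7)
The plan is to do case analysis on the types of the two consecutive False edges. By inspection of the definition of $E$, any False-labeled outgoing edge from a node $(\exists u_k, q)$ is either Edge~\eqref{Eqn:PathToBot} or Edge~\eqref{Eqn:UpdateReact}. The former lands immediately in the sink $\bot$, so any path whose first False edge is PathToBot already satisfies the claim and admits no second outgoing edge.

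The substantive case is when the first False edge is UpdateReact, carrying the path from $(\exists u_k, q_1)$ to $(\exists u_k, q_2)$ with $q_2 = reveal(q_1, k)$ and the same decision variable $u_k$. I will argue that a second UpdateReact out of $(\exists u_k, q_2)$ is now impossible, so the only remaining False option is PathToBot and the path lands in $\bot$. The key invariant to establish is that $reveal(\cdot, k)$ exhausts all reveal opportunities for $u_k$ in one shot: after one application, every environment move eligible to be repositioned (every $w_j$ with $j < k$ that lay after $u_k$ in $q_1$) has been moved immediately before $u_k$ in $q_2$, so the side condition enabling UpdateReact no longer holds on $q_2$. With this invariant in hand the case analysis closes.

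The main obstacle will be pinning down that positional invariant cleanly, since the side conditions for Edges~\eqref{Eqn:PathToBot} and~\eqref{Eqn:UpdateReact} as written reference the position of $w_k$ specifically, whereas the informal semantics of reveal concerns the larger collection of lower-indexed environment moves still trailing $u_k$. Reconciling those two views, namely the idempotence of $reveal$ at index $k$ together with the dichotomy ``something left to reveal'' versus ``nothing left but $w_k$,'' is the crux; once it is stated, the remainder is an unfolding of the definition of $E$.
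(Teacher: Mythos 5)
Your proposal matches the paper's proof: both observe that the only False-labeled edges are Edge~\eqref{Eqn:PathToBot} (which lands in $\bot$ immediately) and Edge~\eqref{Eqn:UpdateReact}, and that one application of $reveal$ moves every eligible environment variable before $\exists u_k$, so the guard for a second UpdateReact fails and the only possible second False edge is PathToBot. Your remark about reconciling the guards' reference to $w_k$ with $reveal$'s action on lower-indexed $w_j$ points at a genuine indexing sloppiness in the paper's definitions, but the paper's proof silently makes the same identification you propose, so the arguments are essentially identical.
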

\begin{proof}
Aside from Edge~\eqref{Eqn:PathToBot}, the only other type of edge with
label false is Edge~\eqref{Eqn:UpdateReact}. reveal moves all
revealed $w_j$ before $u_k$, thus the condition for
Edge~\eqref{Eqn:PathToBot} is true and condition for
Edge~\eqref{Eqn:UpdateReact} is false.
\end{proof}
This leads us to our bound on the number of transitions one does.
\begin{theorem}
It takes at most $2H$ transitions to traverse from the root node to
$\bot$ or $\mathbb{C}$
\end{theorem}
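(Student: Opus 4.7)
My plan is to bound the length of any path from $a_0$ to a sink ($\bot$ or $\mathbb{C}$) by a simple counting argument that combines Lemma~\ref{Lem:AlphaMonotonic} and Lemma~\ref{Lem:TwoFalseEdge}.

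First I would classify the edges that can appear on such a path into three types: (i) \emph{non-terminating True edges}, which are instances of~\eqref{Eqn:UpdateDom} and, by the proof of Lemma~\ref{Lem:AlphaMonotonic}, strictly decrease $\alpha$ by one; (ii) \emph{non-terminating False edges}, which are instances of~\eqref{Eqn:UpdateReact} and leave $\alpha$ unchanged; and (iii) the single \emph{terminating edge} of type~\eqref{Eqn:PathToCont} or~\eqref{Eqn:PathToBot} that carries the path into $\mathbb{C}$ or $\bot$, after which no further transitions are possible.

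Second I would count. Writing $t$ and $f$ for the number of edges of types (i) and (ii), respectively, along a given path, we have $t \le \alpha(a_0) = H-1$ because $\alpha$ is nonnegative and decreases by one per non-terminating True edge. Lemma~\ref{Lem:TwoFalseEdge} rules out two consecutive non-terminating False edges (any False--False pattern would force the second False edge to be the terminating~\eqref{Eqn:PathToBot}), so the non-terminating False edges are interleaved with True ones, giving $f \le t+1 \le H$. Adding the single terminating edge, the path has total length at most $t+f+1 \le (H-1)+H+1 = 2H$.

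The main obstacle I expect is the bookkeeping around the sinks: I need to be careful to apply Lemma~\ref{Lem:TwoFalseEdge} only to pairs of \emph{non-terminating} False edges (otherwise the bound on $f$ would be spurious), and to count the terminating edge exactly once rather than folding it into the interleaving argument. Once that is handled, the result is a one-line arithmetic consequence of the two preceding lemmas, and no further case analysis on the structure of $\mathcal{D}$ is required.
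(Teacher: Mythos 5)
Your proof is correct and follows essentially the same route as the paper's: both arguments rest on Lemma~\ref{Lem:AlphaMonotonic} and Lemma~\ref{Lem:TwoFalseEdge}, using the fact that $\alpha$ starts at $H-1$, drops by one on each non-terminating True edge, and that False edges cannot occur twice in a row without terminating. Your version merely makes the bookkeeping explicit ($t \le H-1$, $f \le t+1$, plus one terminating edge) where the paper pairs up transitions and argues $\alpha$ drops by at least one every two steps; the conclusion $2H$ is the same.
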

\begin{proof}
By Lemmas~\ref{Lem:AlphaMonotonic} and \ref{Lem:TwoFalseEdge} any two
transitions either lead to $\bot$ (causing $\alpha$ to go to 0) or
contain a True edge. True edges either lead to $\mathbb{C}$ (causing
$\alpha$ to go to 0) or decrease $\alpha$ by 1. Thus, every two
transitions alpha decreases by at least 1. Thus $\alpha$ becomes 0
after at most $2H$ transitions. Finally, by construction $\alpha=0$
only on $\bot$ or $\mathbb{C}$.
\end{proof}
\mypara{Building a Decision Tree}
We now turn to how to systematically extract a causal decision tree by
moving through $\mathcal{D}$. We begin by noting that for each game,
$q$ we can associate a decision tree. For a dominant strategy, this
corresponds to a chain. For $\forall \ww \exists \uu$ this is a root
forking into multiple paths based on $\ww$. The first node, $a_0$, of
$\mathcal{D}$ corresponds to a single choice $u_1$ and then a forking
based on $\ww$. As this forking is not causal, we view it as a place
holder for a causal subtree to be inserted. We illustrate these trees
for a few examples in Figure~\ref{Fig:pDTree}. We've annotated
``dominant'' fragments where the choice is independent of previous
$w_k$ with blue nodes. The red nodes are nodes that depend on the
environment's choice of $w_k$.  The pink triangles correspond to a
non-causal subtrees. One completes the decision tree by querying if
there exists a solution to the next game in $\mathcal{D}$, fixing the
path in the decision leading up to a non-causal subtree. Importantly,
at every stage, only 1 decision is required and the rest of the prefix
can be turned into a call for whether there exists a dominant solution
(either for the system of the environment). If so, one replaces the
subtree with the new tree.  The process repeats until there are no
non-causal segments. Reading Figure~\ref{Fig:pDTree} left to right can
be seen as a cartoon of this process.
\begin{figure}[h]
  \centering
  \includegraphics[width = 0.5\textwidth]{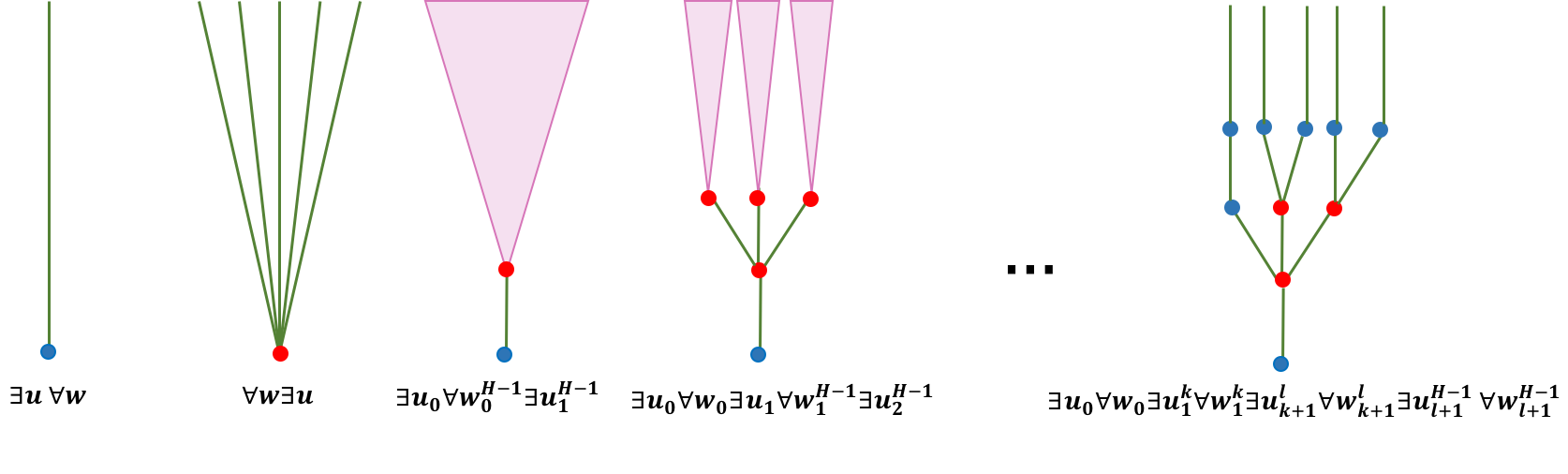}
  \caption{Illustrations of building the decision tree by moving
    through $\mathcal{D}$.}
  \label{Fig:pDTree}
\end{figure}
Again allowing ourselves to miss controllers that are not $\epsilon$
robust, we are able to effectively discretize the space into $R_{min}$
size squares. Thus, one only requires finite branching. 

To illustrate this process on a continuous system, we return to our
toy example.
\begin{example}
Recall the Rock, Paper, Scissors example with constrained environment
(Ex~\ref{Ex:RPS3}). We now modify this example to have the following
dynamics:
\begin{equation}
i_{n+1} = i_n + u_n\ \wedge\ 
j_{n+1} = j_n + w_n
\end{equation}
where $u_n \in [0,1]$ and $w_n \in [0, 1]$, which are again an
instance of \RLA.

We modify Fig~\ref{Fig:RPS} to contain a region of uncertainty of
radius, $\delta$, near the boundaries, shown in Fig~\ref{Fig:RPS_mod}.

\begin{figure}[h]
	\centering
	\includegraphics[scale=0.5]{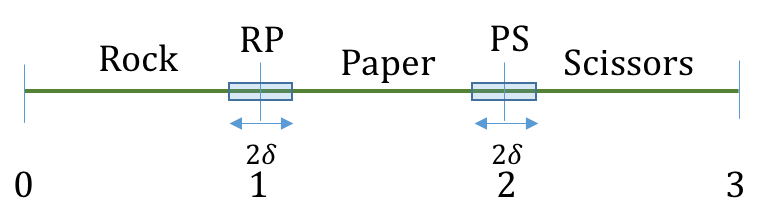}
	\caption{Modified RPS with $\delta$ regions at the boundary}
	\label{Fig:RPS_mod}
\end{figure}

$p_1$'s objective is given  as $\varphi_{RPS}
\eqdef \varphi_d \wedge \varphi^m_{RPS}$, where
\begin{equation} \label{eq:Continuous_RPS_Mod}
\begin{split}
\varphi^m_{RPS} & \eqdef G_{[0, H]} (\varphi_s^R \wedge \varphi_s^P \wedge \varphi_s^S \wedge \varphi_{init}) \\
\varphi_s^R:  & (i \in R) \rightarrow (j \in \{S, PS\}) \\
\varphi_s^P:  & (i \in P) \rightarrow (j \in \{R, RP\}) \\
\varphi_s^S:  & (i \in S) \rightarrow (j \in \{P, PS, RP\}) \\
\varphi_{init}: & (i = 0.5) \wedge (j = 0.5) 
\end{split} 
\end{equation}
\end{example}

As in Sec~\ref{sec:with_assump}, we consider two turns: $\InputSpace
= \DisturbanceSpace = [0,1]^2$. For this example $\delta = \epsilon
= \frac{1}{8}$.

We first search for a dominant strategy for $p_1$ by attempting to find
$\exists u_0 u_1 \forall w_0 w_1 \varphi_{RPS}$.
We use Alg~\ref{Alg:compute_refuted}, to cover $\InputSpace$ with squares. Consider $w_0 w_1 = 00$ i.e., $j_1, j_2 = RR$. This would falsify any $u_0 u_1$ such that $i_2 \in \{S, PS\}$, i.e., 
\begin{align*}
u_0 + u_1 + i_0 &\geq 2 - \delta \Rightarrow u_0 + u_1 \geq 1.5 - \delta
\end{align*} 
Now consider $w_0 w_1 = 11$, i.e., $j_1 j_2 = PS$. This would falsify any $u_0 u_1$ such that $i_1 = \{R, RP\}$ or $i_2 = \{P, RP, PS\}$, i.e.,
\begin{align*}
u_0 + i_0 & \leq 1 + \delta \Rightarrow
u_0 \leq 0.5 + \delta\\
u_0 + u_1 + i_0 & \leq 2 + \delta \Rightarrow
u_0 + u_1  \leq 1.5 + \delta 
\end{align*}
From Fig~\ref{Fig:refuted_Area} we see the entire $\InputSpace$ thrown
away. We conclude a dominant $p1$ strategy does not exist. 
\begin{figure}[h]
	\subfigure[Finding dominant strategy  $p1$]{
		\includegraphics[height=1in]{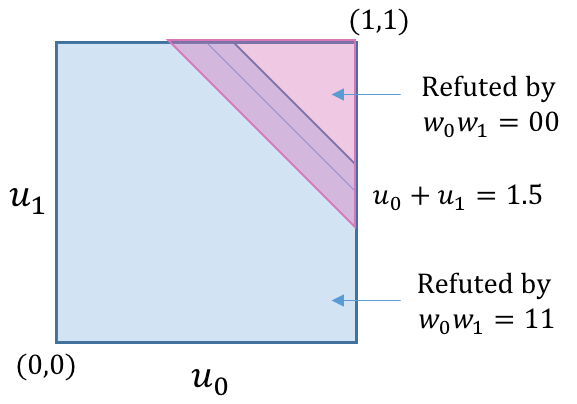}
		\label{Fig:refuted_Area}
	}%
	\hspace{2em}
		\subfigure[Finding dominant strategy for $p2$]{
			\includegraphics[height=1in]{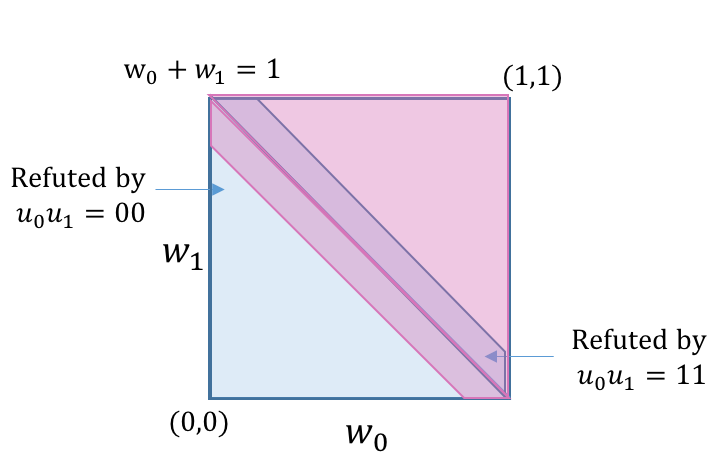}
			\label{Fig:refuted_Area_w}
		}%
\end{figure}
We apply a similar procedure to search for a dominant strategy for
$p_2$ and see from Fig~\ref{Fig:refuted_Area_w} that it does not
exist.
  
Moving to the next game in $\mathcal{D}$, we check we can find a
$\uu_0$ such that $\exists u_0 \forall w_0 w_1 \exists u_1
\varphi_{RPS}$ is True. Note that the dynamics are Lipschitz
continuous with $u_n$ and $w_n$ with bound $L = 1$. This allows us to
break to the $\InputSpace$ for $u_0 \in [0,1]$ in segments of length
$\geq \frac{\epsilon}{L} = \frac{1}{8}$. Consider $u_0 \in [0,
\frac{1}{8}]$. Notice, there exists a dominant response $w_0 w_1 = 1
*$, causing the CEGIS query to return false. Discarding $u_0 =
\frac{1}{8}$, we have, $\InputSpace = (\frac{1}{8}, 1]$. Let us now
consider $u_0 \in (\frac{7}{8}, 1] = 1$. Using the
Alg~\ref{Alg:modified_cegis}, we see that there does not exist a
dominant $w_0 w_1$. This concludes that a reactive strategy using $u_0
= 1$ exists. Moving on to the final game, $\exists u_0 \forall w_0
\exists u_1 \forall w_1 \varphi_{RPS}$. Recalling our previous
decision $u_0 = 1$, i.e., $i_1 = P$, we search over $w_0$ to solve for
$\forall w_0 \exists u_1 \forall w_1 \varphi_{RPS}$. Again, using the
Lipschitz bound $L= 1$ and $\epsilon = \frac{1}{8}$, we can break the
$\DisturbanceSpace$ for $w_0 \in [0,1]$ in segments of length $\geq
\frac{\epsilon}{L} = \frac{1}{8}$. We now visit each segment of $w_0$
and find the dominant $u_1$. Let $w_0 = [0, \frac{1}{8}]$, i.e., $j_1
\leq \frac{5}{8} \in R$. Since $i = P$ beats $j = R$, $u_1= 0$ is a
dominant play. Since $i_1= P$ beats $j_1 = \{R, RP\}$, we can continue
this reasoning for any segments such that $w_0 \leq \frac{5}{8}$. Let
us now consider $w_0 \in (\frac{5}{8}, \frac{6}{8}]$, i.e., $j_1 =
P$. In this case $u_1 = 1$, i.e., $i_2 = S$ is a dominant play for the
system. We now have a
decision tree, Fig~\ref{Fig:cont_DT}
\vspace{-2em}
\begin{figure}[h]
	\centering
	\includegraphics[scale =0.5]{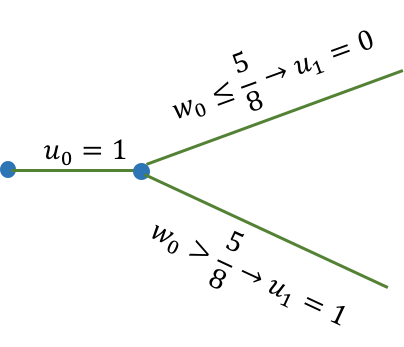}
	\caption{Decision Tree for continuous RPS with dynamics}
	\label{Fig:cont_DT}
\end{figure} 
\label{sec:decision_trees}

\section{Conclusion}
\label{sec:concl}
We have presented a methodology to build reactive controllers for
Lipschitz continuous systems.  We described an efficient tunable CEGIS
scheme based on optimization and SMT for synthesizing controllers for
Lipschitz continuous systems in an adversarial environment. This
algorithm generalizes counterexample pairs and is guaranteed to
terminate. We utilized the quantitative semantics of STL to discard
regions of our control space to find $\epsilon$ robust strategies. In
the absence of a dominant strategy, we find a causal reactive strategy
that can be expressed as decision trees. There are a number of
directions one could imagine extending this work. A promising
direction is to attempt to create lassos from the decision trees,
resulting in infinite horizon controllers. Another direction is to
incorporate less more intelligent conflict analysis to provide better
conflict lemmas.  Lastly, a theoretical/empirical understanding of the
trade-off between the maximization oracle vs the satisifcation oracle
would be immensely valuable.


%
{\small
\bibliographystyle{abbrv}
\bibliography{ref}
}
\end{sloppypar}

\end{document}